\newcommand{\onlyFinal}[1]{#1}
\newcommand{\onlyTR}[1]{#1}
\newcommand{\xRightarrow}[2][]{\ext@arrow 0359\Rightarrowfill@{#1}{#2}}
\newcommand{\gopherlyzer}{\textup{gopherlyzer}}
\newcommand{\Gopherlyzer}{\textup{Gopherlyzer}}
\newcommand{\mathem}{\sf}
\newcommand{\WHILE}{\mbox{\mathem while}}
\newcommand{\DO}{\mbox{\mathem do}}
\newcommand{\SKIP}{\mbox{\mathem skip}}
\newcommand{\GO}{\mbox{\mathem go}}
\newcommand{\CLOSE}{\mbox{\mathem close}}
\newcommand{\SELECT}{\mbox{\mathem select}}
\newcommand{\ELSE}{\mbox{\mathem else}}
\newcommand{\IF}{\mbox{\mathem if}}
\newcommand{\THEN}{\mbox{\mathem then}}
\newcommand{\bi}{\begin{array}[t]{@{}l@{}}}
\newcommand{\ei}{\end{array}}
\newcommand{\ba}{\begin{array}}
\newcommand{\ea}{\end{array}}
\newcommand{\bda}{\[\ba}
\newcommand{\eda}{\ea\]}
\newcommand{\bp}{\begin{quote}\tt\begin{tabbing}}
\newcommand{\ep}{\end{tabbing}\end{quote}}
\def\ruleform#1{{\setlength{\fboxrule}{1pt}\fbox{\normalsize $#1$}}}
\newcommand{\myirule}[2]{{\renewcommand{\arraystretch}{1.2}\ba{c} #1
                      \\ \hline #2 \ea}}
\newcommand{\rlabel}[1]{\mbox{(#1)}}
\newcommand{\turns}{\, \vdash \,}
\newcommand{\match}[3]{#1 \xRightarrow{#2} #3}
\newcommand{\decU}[1]{#1 \downarrow}
\newcommand{\conc}{\cdot}%%{.}
\newcommand{\concSPACE}{\! \cdot \!}%%{.}
\newcommand{\bigconc}{\bullet}%%{.}
\newcommand{\FSA}[1]{{\cal FSA}(#1)}
\newcommand{\deriv}[2]{d_{#2}(#1)} %% {(#1)\backslash #2}
\newcommand{\pderiv}[2]{pd_{#2}(#1)} %%{#1{\backslash}_p #2} %% {#1\backslash_p #2}
\newcommand{\concPart}[1]{{\cal C}(#1)}
\newcommand{\false}{\mathit{False}}
\newcommand{\true}{\mathit{True}}
\newcommand{\len}[1]{\mathit{len}(#1)}
\newcommand{\approxP}[2]{#1 \leadsto #2}
\newcommand{\approxPID}[3]{#1 \stackrel{#3}{\leadsto} #2}
\newcommand{\Left}{\mathit{L}}
\newcommand{\Right}{\mathit{R}}
\newcommand{\threadID}[1]{\sharp(#1)}
\newcommand{\semB}[3]{#1 \turns #2 \Downarrow #3}
\newcommand{\semC}[3]{#1 \turns #2 \Rightarrow #3}
\newcommand{\semP}[3]{#1 \xRightarrow{#2} #3}
\newcommand{\sndEvt}[1]{{#1!}}
\newcommand{\rcvEvt}[1]{{#1?}}
\newcommand{\sndEvtID}[2]{\sndEvt{#1}^{#2}}
\newcommand{\rcvEvtID}[2]{\rcvEvt{#1}^{#2}}
\newcommand{\snd}[2]{#1^{\mathsf{s}} \leftarrow #2} %%{#1 \stackrel{s}{\leftarrow} #2}
\newcommand{\rcv}[2]{#1 \leftarrow #2^{\mathbf{r}}} %%{#1 \stackrel{r}{\leftarrow} #2}
\newcommand{\forkEff}[1]{\mathit{Fork(#1)}}
\newcommand{\forkEffN}[2]{\mathit{Fork^{#1}(#2)}}
\newcommand{\chan}[2]{\mathit{Chan}(#1)^{#2}}
\newcommand{\SyncChan}{\mathit{Chan}}
\newcommand{\ms}[1]{{\bf MS: #1}}
\newcommand{\pt}[1]{{\bf PT: #1}}
\newcommand{\leftQ}[2]{#1 \backslash #2}
\newcommand\semle\leqq %\leqslant
\newcommand\semeq\equiv              %semantic equality
\newcommand\syneq=         %syntactic equality
\newcommand\DirectDescendant\sqsubset
\newcommand\Descendant\preceq
\newcommand\TrueDescendant\prec
\newcommand\Angle[1]{\langle#1\rangle}
\newcommand\Bag[1]{\{\!\!\{#1\}\!\!\}}%{\lbag#1\rbag}
\newcommand\CalC{\mathcal{C}}
\newcommand\Config[2]{\Angle{#1, \Bag{#2}}}
\newcommand\Override{\lhd}
\newcommand\BeforeOrEqual\le
\title{Static Trace-Based Deadlock Analysis for Synchronous Mini-Go}
\author{Kai Stadtm{\"u}ller\inst1 \and Martin Sulzmann\inst1  \and Peter Thiemann\inst2}
\institute{
  Faculty of Computer Science and Business Information Systems \\
  Karlsruhe University of Applied Sciences \\
  Moltkestrasse 30, 76133 Karlsruhe, Germany\\
  \email{kai.stadtmueller@live.de} \\
  \email{martin.sulzmann@hs-karlsruhe.de}
  \and 
  Faculty of Engineering, University of Freiburg\\ Georges-K{\"o}hler-Allee
  079, 79110 Freiburg, Germany \\
  \email{thiemann@acm.org}
}
\begin{document}

\maketitle

\begin{abstract}
  We consider the problem of static deadlock detection for programs
  in the Go programming language which make use of synchronous channel
  communications.
  In our analysis, regular expressions extended with a fork operator %(forkable behaviors)
  capture the communication behavior of a program.
  Starting from a simple criterion that characterizes traces of deadlock-free
  programs, we develop automata-based methods to check for
  deadlock-freedom.
  The approach is implemented and evaluated with a series of examples.
\end{abstract}

%--------------------------------------------------------
%--------------------------------------------------------
\section{Introduction}

The Go programming language~\cite{golang} attracts increasing
attention because it offers an elegant approach
to concurrent programming with message-passing
in the style of Communicating Sequential Processes (CSP)~\cite{Hoare:1978:CSP:359576.359585}.
Although message passing avoids many of the pitfalls of concurrent
programming with shared state (atomicity violations, order violations,
issues with locking, and so on), it still gives rise to problems like deadlock.
Hence, the goal of our work is the static detection of deadlocks in Go
programs which make use of (synchronous) message-passing using the
unbuffered version of Go's channels.

\subsection{Related work}
\label{sec:related-work}

Leaving aside data races, deadlocks constitute one of the core
problems in concurrent programming. However, most work on static
detection of deadlocks on the programming language level deals with
shared-memory concurrency. 

Boyapati and coworkers \cite{DBLP:conf/oopsla/BoyapatiLR02} define a
type-based analysis that relies on a partial order on locks and
guarantees that well-typed programs are free of data races and
deadlocks.
The approaches by Williams and coworkers
\cite{DBLP:conf/ecoop/WilliamsTE05} and Engler and Ashcraft
\cite{DBLP:conf/sosp/EnglerA03} detect cycles in a precomputed 
static lock-order graph to highlight potential deadlocks. In distributed
and database systems, most approaches are dynamic but also involve
cycle detection in wait-for graphs (e.g., \cite{Huang:1990:DDD:77606.77611}). In these approaches, the
main points of interest are the efficiency of the cycle detection
algorithms and the methods employed for the construction and
maintenance of the wait-for graph.

% There has also been a number of abstract interpretation based analyses
% that are closer in spirit to the analysis we employ for the detection
% of communication deadlocks.
Mercouroff \cite{DBLP:conf/mfps/Mercouroff91} employs abstract
interpretation for an analysis of CSP programs using an abstract
domain that approximates the number of messages sent between processes.
Colby~\cite{DBLP:conf/pepm/Colby95} presents an analysis that uses control paths to identify threads that may be
created at the same point and constructs the communication topology of the
program. A more precise control-flow analysis was proposed by Martel and
Gengler~\cite{DBLP:conf/spin/MartelG00}. Similar to our approach, in
their work the accuracy of the analysis is enhanced by analyzing
finite automata to eliminate some impossible communication traces.

For message-passing programs, there are 
 elaborate algorithms that attempt accurate matching of
communications in process calculi (e.g., the work of Ladkin and Simon
\cite{Ladkin1995}). However, they consider messages between fixed
partners whereas we consider communication between multiple partners
on shared channels.

Further analysis of message passing in the context of
Concurrent ML (CML)~\cite{Reppy:1999:CPM:317040} is based on
effect systems that abstract programs into regular-expression-like
behaviors with the goal of detecting finiteness of communication
topologies \cite{DBLP:conf/popl/NielsonN94}. 
The deadlock detection analysis of Christakis and Sagonas
\cite{DBLP:conf/padl/ChristakisS11} also 
constructs a static graph and searches it for cycles. 
Specific to Go, the paper by Ng and Yoshida \cite{DBLP:conf/cc/NgY16}
translates Go programs into a core calculus with session types and
then attempts to synthesize a global choreography that subsumes all
session. A program is deemed deadlock-free if this synchronization succeeds
and satisfies some side condition.
Like our work, they consider a fixed number of processes and
synchronous communication.
Section~\ref{sec:evaluation}  contains a more detailed comparison with this work.

Kobayashi~\cite{DBLP:conf/concur/Kobayashi06} considers deadlock detection for the $\pi$-calculus~\cite{Milner:1999:CMS:329902}.
His type inference algorithm infers usage constraints
among receive and send operations.
In essence, the constraints represent a dependency graph where
the program is deadlock-free if there are no circular dependencies
among send and receive operations.
The constraints are solved by reduction to Petri net reachability
\cite{DBLP:journals/acta/Kobayashi05}. 
A more detailed comparison with Kobayashi's work is given
in  Section~\ref{sec:evaluation}.

%% MS;
%% \ms{some comments}
%% \cite{DBLP:conf/spin/MartelG00} like \cite{DBLP:conf/popl/NielsonN94}
%% talk about analyzing the communication topology of a program but how
%% does this yield a deadlock detection analysis?
%% \cite{DBLP:conf/spin/MartelG00} consider CML without non-deterministic choice,
%% claim they can deal with dynamic channels and fork (non of the examples
%% seems to make use of this feature) and not clear how this is actually
%% dealt with.
%% 
%% \ms{note} one nice thing about our approach is that we can a simple, precise
%% deadlock-freedom criteria, for the other approaches if feels ...
%% we have some analysis/algorithm ... which does something .. but how
%% to interpret the analysis?

\subsection{Contributions}
\label{sec:contribution}

Common to all prior work is their reliance on automata-/graph-based methods.
The novelty of our work lies in the use of a symbolic deadlock detection method
based on forkable behavior.

Forkable behaviors in the form of regular expressions extended with
fork and general recursion were introduced by Nielson and Nielson \cite{DBLP:conf/popl/NielsonN94}
to analyze the communication topology of CML (which is based
on ideas from CSP, just like Go).
In our own recent work \cite{DBLP:conf/lata/SulzmannT16}, we establish some
important semantic foundations for forkable behaviors such as a compositional trace-based semantics
and a symbolic Finite State Automata (FSA) construction method via Brzozowski-style derivatives \cite{321249}.
In this work, we apply these results to statically detect deadlocks in Go programs.

Specifically, we make the following contributions:

\begin{itemize}
\item We formalize Mini-Go, a fragment of the Go programming language
  which is restricted to synchronous message-passing (Section \ref{sec:mini-go}).

\item  We approximate the communication behavior of Mini-Go
      programs with forkable behaviors (Section~\ref{sec:approximation}).

\item We define a criterion for deadlock-freedom in terms of the traces resulting
  from forkable behaviors. We give a decidable check for deadlock-freedom for
  a large class of forkable behaviors by applying the FSA construction
  method developed in prior work \cite{DBLP:conf/lata/SulzmannT16}. We also consider 
   improvements to eliminate false positives
  (Section~\ref{sec:static-analysis}).
  
\item We evaluate our approach with examples
  and conduct a comparison with closely related work (Section~\ref{sec:evaluation}).
\end{itemize}

% Section~\ref{sec:related} summarizes related works.
% Section~\ref{sec:conclusion} concludes.

\onlyTR{The appendix contains further details such as proofs etc.}

%--------------------------------------------------------
%--------------------------------------------------------

\section{Highlights}
\label{sec:highlights}

\begin{lstlisting}[float={tp},captionpos={b},caption={Message passing in Go},label={lst:messsage-passing-go}]
func sel(x, y chan bool) {
        z := make(chan bool)
        go func() { z <- (<-x) }()
        go func() { z <- (<-y) }()
        <-z
}       
func main() {
        x := make(chan bool)
        y := make(chan bool)
        go func() { x <- true }()
        go func() { y <- false }()      
        sel(x,y)
        sel(x,y)        
}
\end{lstlisting}
Before we delve into deadlocks and deadlock detection, we first
illustrate the message passing concepts found in Go with the example
program in Listing~\ref{lst:messsage-passing-go}.
The \lstinline{main} function creates two synchronous channels
\texttt{x} and \texttt{y} that transport Boolean values.
Go supports (a limited form of) type inference and therefore no type annotations are required.
We create two threads using the \texttt{go \textit{exp}} statement.
It takes an expression \texttt{\textit{exp}} and executes it in a
newly spawned go-routine (a thread).
Each of these expressions calls an anonymous function that performs a
send operation on one of the channels.
In Go, we write \texttt{x <- true} to send
value \texttt{true} via channel \texttt{x}.
Then we call the function \lstinline{sel} twice.
This function creates another Boolean channel \texttt{z} locally and
starts two threads that ``copy'' a value from one of the argument
channels to \lstinline{z}. In Go, we write \texttt{<-x} to receive a value
via channel \texttt{x}. Thus, \texttt{z <- (<-x)} sends a value received via channel \texttt{x}
to channel \texttt{z}.

So, the purpose of \lstinline{sel} is to choose a value which can either
be received via channel \texttt{x} or channel \texttt{y}.
As each channel is supplied with a value, each of the two calls to  \lstinline{sel} might be
able to retrieve a value.
While there is a schedule such that the \lstinline{main} program runs
to completion, it is also possible that
execution of the second \lstinline{sel} call will get stuck.
Consider the case that in the first call to \lstinline{sel} both helper threads
get to execute the receive operations on \texttt{x} and \texttt{y} and
forward the values to channel \texttt{z}.
In this case, only one of the values will be picked up by the
\lstinline{<-z} and returned, but the local thread with the other value
will be blocked forever waiting for another read on \texttt{z}. 
In the second call to \lstinline{sel}, none of the local threads can
receive a value from \texttt{x} or \texttt{y}, hence there will be no
send operation on \texttt{z}, so that the final receive
\lstinline{<-z} remains blocked.

Our approach to detect such devious situations is to express the communication
behavior of a program in terms of forkable behaviors.
For the \lstinline{main} function in
Listing~\ref{lst:messsage-passing-go}, we obtain the following
forkable behavior
\bda{c}
\forkEff{\sndEvt{x}} \concSPACE \forkEff{\sndEvt{y}}
\concSPACE \forkEff{\rcvEvt{x} \concSPACE \sndEvt{z_1}} \concSPACE \forkEff{\rcvEvt{y} \concSPACE \sndEvt{z_1}} \concSPACE \rcvEvt{z_1}
\concSPACE \forkEff{\rcvEvt{x} \concSPACE \sndEvt{z_2}} \concSPACE \forkEff{\rcvEvt{y} \concSPACE \sndEvt{z_2}} \concSPACE \rcvEvt{z_2}
\eda
We abstract away the actual values sent and write $\sndEvt{x}$ to denote
sending a message to channel \texttt{x}
and $\rcvEvt{x}$ to denote reception via channel \texttt{x}.
$\forkEff{}$ indicates a forkable (concurrent) behavior which
corresponds to \texttt{go} statements in the program.
The concatenation  operator $\conc$ connects two forkable behaviors in
a sequence.
The function calls to \lstinline{sel} are inlined and the local
channels renamed to $z_1$ and $z_2$, respectively.

The execution schedules of \lstinline{main} can be described by a
matching relation for forkable behaviors 
where we symbolically rewrite expressions.
Formal details follow later. Here are some possible matching steps for our example.

\bda{ll}
 &
\forkEff{\sndEvt{x}} \conc \forkEff{\sndEvt{y}}
\conc \forkEff{\rcvEvt{x} \conc \sndEvt{z_1}} \conc \forkEff{\rcvEvt{y} \conc \sndEvt{z_1}} \conc \rcvEvt{z_1}
\conc
\\ & \forkEff{\rcvEvt{x} \conc \sndEvt{z_2}} \conc \forkEff{\rcvEvt{y} \conc \sndEvt{z_2}} \conc \rcvEvt{z_2}
\\
\xRightarrow{} &
\Bag{\underline{\sndEvt{x}},
     \sndEvt{y},
     \underline{\rcvEvt{x}} \conc \sndEvt{z_1},
     \rcvEvt{y} \conc \sndEvt{z_1}, 
     \rcvEvt{z_1}
\conc \forkEff{\rcvEvt{x} \conc \sndEvt{z_2}} \conc \forkEff{\rcvEvt{y} \conc \sndEvt{z_2}} \conc \rcvEvt{z_2}}
\\
\xRightarrow{\sndEvt{x} \conc \rcvEvt{x}} &
\Bag{ \sndEvt{y},
     \underline{\sndEvt{z_1}},
     \rcvEvt{y} \conc \sndEvt{z_1}, 
     \underline{\rcvEvt{z_1}}
     \conc \forkEff{\rcvEvt{x} \conc \sndEvt{z_2}} \conc \forkEff{\rcvEvt{y} \conc \sndEvt{z_2}} \conc \rcvEvt{z_2}}
\\
\xRightarrow{\sndEvt{z_1} \conc \rcvEvt{z_1}} &
\Bag{ \sndEvt{y},
     \rcvEvt{y} \conc \sndEvt{z_1}, 
     \forkEff{\rcvEvt{x} \conc \sndEvt{z_2}} \conc \forkEff{\rcvEvt{y} \conc \sndEvt{z_2}} \conc \rcvEvt{z_2}}
\\
\xRightarrow{} &
\Bag{ \sndEvt{y},
     \rcvEvt{y} \conc \sndEvt{z_1}, 
     \rcvEvt{x} \conc \sndEvt{z_2},
     \rcvEvt{y} \conc \sndEvt{z_2},
     \rcvEvt{z_2}}
\\
\xRightarrow{\sndEvt{y}\conc\rcvEvt{y}\conc \sndEvt{z_2}\conc\rcvEvt{z_2}} &
\Bag{\rcvEvt{y} \conc \sndEvt{z_1},
     \rcvEvt{x} \conc \sndEvt{z_2}}
\eda

We first break apart the expression into its concurrent parts indicated by the multiset notation $\Bag{\cdot}$.
Then, we perform two rendezvous (synchronization) steps where the partners involved are underlined.
In essence, the first call to \texttt{sel} picks up the value sent via channel \texttt{x}.
The last step where we combine two synchronization steps (and also omit underline) shows that
the second call to \texttt{sel} picks up the value sent via channel \texttt{y}.
Note that the main thread terminates but as for each call to \texttt{sel}
one of the helper threads is stuck our analysis reports a deadlock.

As mentioned above, another possible schedule is that the first call to \texttt{sel}
picks up both values sent via channels \texttt{x} and \texttt{y}.
In terms of the matching relation, we find the following
\bda{ll}
 &
\forkEff{\sndEvt{x}} \conc \forkEff{\sndEvt{y}}
\conc \forkEff{\rcvEvt{x} \conc \sndEvt{z_1}} \conc \forkEff{\rcvEvt{y} \conc \sndEvt{z_1}} \conc \rcvEvt{z_1}
\conc
\\ & \forkEff{\rcvEvt{x} \conc \sndEvt{z_2}} \conc \forkEff{\rcvEvt{y} \conc \sndEvt{z_2}} \conc \rcvEvt{z_2}
\\
\xRightarrow{\sndEvt{x}\conc \rcvEvt{x} \conc \sndEvt{y} \conc \rcvEvt{y} \conc \sndEvt{z_1} \conc \rcvEvt{z_1}}
&
\Bag{\sndEvt{z_1},\rcvEvt{x} \conc \sndEvt{z_2}, \rcvEvt{y} \conc \sndEvt{z_2}, \rcvEvt{z_2}
    }

\eda
As we can see, the second helper thread of the first call to \texttt{sel} is stuck,
both helper threads of the second call are stuck as well as the main thread.
In fact, this is the deadlock reported by our analysis as we attempt to find minimal deadlock examples.

The issue in the above example can be fixed by making use of \emph{selective} communication
to non-deterministically choose among multiple communications.
\begin{lstlisting}
func selFixed(x, y chan bool) {
  select {
    case z = <-x:
    case z = <-y:
  }
}
\end{lstlisting}
The \SELECT\ statement blocks until one of the cases applies.
If there are multiple \SELECT\ cases whose communication is enabled,
the Go run-time system `randomly' selects one of those and proceeds with
it. Based on a pseudo-random number the select cases are permuted
  and tried from top to bottom.
Thus, the deadlocking behavior observed above disappears
as each call to \lstinline{selFixed} picks up either a value sent
via channel \texttt{x} or channel \texttt{y} but it will never consume values from both channels.
%% MS: omit, explained in detail later, seems a bit confusing to mention at this point
%% Our analysis method supports selective communication by making use of alternatives.
%% We also present a refined method where we report that the \texttt{selFixed} variant
%% is deadlock free.

%--------------------------------------------------------
%--------------------------------------------------------

\section{Mini-Go}
\label{sec:mini-go}

We formalize a simplified fragment of the Go programming language where
we only consider a finite set of pre-declared, synchronous channels.
For brevity, we also omit procedures and first-class channels and
only consider Boolean values.

\begin{definition}[Syntax]
  \bda{lcll}
  x,y, &\dots & &
  \mbox{Variables, Channel Names}
  \\
  s & ::= & v \mid \SyncChan & \mbox{Storables}
  \\
  v & ::= & \true \mid \false & \mbox{Values}
  \\
  \textit{vs} & ::= & [] \mid v : \textit{vs} & \mbox{Value Queues}
  \\
  b & ::= & v \mid x \mid b \&\& b \mid !b & \mbox{Expressions}
  \\
  e,f & ::= & \rcv{x}{y} \mid \snd{y}{b} & \mbox{Receive/Send}
  \\
  p,q & ::= & \SKIP \mid \IF\ b \ \THEN\ p \ \ELSE\ q \mid \WHILE\ b \ \DO\ p \mid p ; q & \mbox{Commands}
  \\ & \mid & %e \mid
  \SELECT\ [e_i \Rightarrow p_i]_{i\in I}  & \mbox{Communications}
  \\ & \mid & \GO\ p   & \mbox{Threads}
  \eda
\end{definition}

Variables are either bound to Boolean values or to the symbol $\SyncChan$
which denotes a  synchronous channel.
Like in Go, we use the `arrow' notation for the send and receive
operations on  channels.
We label the channel name to distinguish receive
from send operations.
That is, from $\rcv{x}{y}$ we conclude that $y$ is the channel
via which we receive a value bound to variable $x$.
From $\snd{y}{b}$ we conclude that $y$ is the channel to which some
Boolean value is sent.
Send and receive communications are shorthands
for unary selections: $e = \SELECT\ [e \Rightarrow \SKIP]$. 

The semantics of a Mini-Go program is defined with a small-step semantics.
The judgment
$\semP{\Config S {p_1, \dots, p_n}}{T}{\Config{S'}{p'_1,\dots, p'_m}}$
indicates that execution of program threads $p_i$
may evolve into threads $p'_j$ with trace $T$.
The notation $\Bag{p_1,...,p_n}$ represents a multi-set of
concurrently executing programs $p_1$, ..., $p_n$.
For simplicity, we assume that all threads share a global state $S$
and that distinct threads have only variables bound to channels in common.
% Hence, we assume that all local variables are distinct which can be achieved
% via renaming.

Program trace $T$ records the communication behavior as a sequence of symbols where
symbol $\sndEvt{x}$ represents a send operation on channel $x$
and symbol $\rcvEvt{x}$ represents a receive operation on channel $x$.
As we assume synchronous communication,
each communication step  involves exactly two threads as formalized in
the judgment $\semP{\Config S {p, q}}{T}{\Config{S'}{p',q'}}$. 

The semantics of Boolean expressions is defined with a big-step semantics
judgment $\semB{S}{b}{v}$, where $S$ is the state in which expression
$b$ evaluates to value $v$.
For commands, the judgment $\semC{S}{p}{q}$ formalizes one (small-)
step that executes a single statement. Thus,
we are able to switch among different program threads after each statement.
Here are the details.

\begin{definition}[State] A state $S$ is either empty, a mapping, or an
  override a state with a new mapping:
  $
   S \ ::= \ () \mid (x \mapsto s) \mid S \Override (x \mapsto s)
  $
 \end{definition}
We write $S(x)$ to denote state lookup.
We assume that mappings in the right operand of the map override $\Override$ take
precedence. They overwrite any mappings in the left operand.
That is, $(x \mapsto \true) \Override (x \mapsto \false) = (x \mapsto \false)$.
We assume that for each channel $x$ the state contains
a mapping $x \mapsto \SyncChan$.

\begin{definition}[Expression Semantics $\semB{S}{b}{v}$]
  \begin{mathpar}
    \semB{S}{\true}{\true}

    \semB{S}{\false}{\false}
    \\
    \myirule{S(x) = v}
    {\semB{S}{x}{v}}
    
    \myirule{\semB{S}{b_1}{\false}}
    {\semB{S}{b_1 \&\& b_2}{\false}}

    \myirule{\semB{S}{b_1}{\true} \ \ \semB{S}{b_2}{v}}
    {\semB{S}{b_1 \&\& b_2}{v}}         

    \myirule{\semB{S}{b}{\false}}{\semB{S}{!b}{\true}}

    \myirule{\semB{S}{b}{\true}}{\semB{S}{!b}{\false}}
  \end{mathpar}
\end{definition}

\begin{definition}[Commands $\semC{S}{p}{q}$]
  \begin{mathpar}
  \rlabel{If-T} \ \myirule{\semB{S}{b}{\true}}
          {\semC{S}{\IF\ b \ \THEN\ p \ \ELSE\ q}{p}}
    
  \rlabel{If-F} \ \myirule{\semB{S}{b}{\false}}
          {\semC{S}{\IF\ b \ \THEN\ p \ \ELSE\ q}{q}}

  \rlabel{While-F} \ \myirule{\semB{S}{b}{\false}}
          {\semC{S}{\WHILE\ b \ \DO\ p}{\SKIP}}

  \rlabel{While-T} \ \myirule{\semB{S}{b}{\true}}
          {\semC{S}{\WHILE\ b \ \DO\ p}{p ; \WHILE\ b \ \DO\ p}}

          \rlabel{Skip} \ \semC{S}{\SKIP ; p}{p}

  \rlabel{Reduce} \ \myirule{\semC{S}{p}{p'}}
          {\semC{S}{p;q}{p';q}}

  \rlabel{Assoc} \ \semC{S}{(p_1 ; p_2) ; p_3}{p_1 ; (p_2 ; p_3)}
  \end{mathpar}

  % \bda{c}
  % \rlabel{If-T} \ \myirule{\semB{S}{b}{\true}}
  %         {\semC{S}{\IF\ b \ \THEN\ p \ \ELSE\ q}{p}}
  % \\ \\
  % \rlabel{If-F} \ \myirule{\semB{S}{b}{\false}}
  %         {\semC{S}{\IF\ b \ \THEN\ p \ \ELSE\ q}{q}}
  % \\ \\
  % \rlabel{While-F} \ \myirule{\semB{S}{b}{\false}}
  %         {\semC{S}{\WHILE\ b \ \DO\ p}{\SKIP}}
  % \\ \\
  % \rlabel{While-T} \ \myirule{\semB{S}{b}{\true}}
  %         {\semC{S}{\WHILE\ b \ \DO\ p}{p ; \WHILE\ b \ \DO\ p}}
  % \\ \\
  % \rlabel{Skip} \ \semC{S}{\SKIP ; p}{p}
  % \ \ \ \
  % \rlabel{Reduce} \ \myirule{\semC{S}{p}{p'}}
  %         {\semC{S}{p;p''}{p';p''}}
  % \\ \\
  % \rlabel{Assoc} \ \semC{S}{(p_1 ; p_2) ; p_3}{p_1 ; (p_2 ; p_3)}
  % \eda
\end{definition}

\begin{definition}[Communication Traces]
 \label{def:communications}
  \bda{lcll}
  T & ::= & \epsilon & \text{empty trace} \\
  & \mid & \sndEvt{x} & \text{send event}\\
  & \mid& \rcvEvt{x} & \text{receive event}\\
  & \mid & T \conc T & \text{sequence/concatenation}
  \eda
\end{definition}    

As we will see, the traces obtained by running a program are of a particular `synchronous' shape.

\begin{definition}[Synchronous Traces]
\label{def:sync-trace}  
  We say $T$ is a \emph{synchronous} trace if $T$ is of the following more restricted form.
  \bda{rcl}
    T_s & ::= & \varepsilon \mid \alpha \conc \bar{\alpha} \mid T_s \conc T_s
  \eda
  where $\bar{\alpha}$ denotes the complement of $\alpha$ and is defined as follows:
  For any channel $y$, $\overline{\rcvEvt{y}} = \sndEvt{y}$ and $\overline{\sndEvt{y}} = \rcvEvt{y}$.
\end{definition}

We assume common equality laws for traces such as associativity of $\conc$
and $\epsilon$ acts as a neutral element. That is,  $\epsilon \conc T = T$.
Further, we consider the two synchronous traces
$\alpha_1 \conc \overline{\alpha_1} \conc ... \conc \alpha_n \conc \overline{\alpha_n}$
and $\overline{\alpha_1}\conc\alpha_1 \conc ... \conc \overline{\alpha_n} \conc \alpha_n$
to be equivalent.

\begin{definition}[Synchronous Communications
    $\semP{\Config S {p, q}}{T}{\Config{S'}{p',q'}}$]
  \label{def:sync-comm}
  \bda{c}
  \rlabel{Sync} \ \
  \myirule{ \mbox{for $k \in I \ \  l \in J$ where}
     \\ e_k = \rcv{x}{y} \ \ f_l = \snd{y}{b}
     \\ S_1(y) = \SyncChan \ \
        \semB{S_1}{b}{v}  \ \ S_2 = S_1 \Override (x\mapsto v)}
          {\semP{\Config{S_1}{ \SELECT\ [e_i \Rightarrow p_i]_{i\in I} , \SELECT\ [f_j \Rightarrow q_j]_{j\in J} }}{\sndEvt{y}\conc\rcvEvt{y}}{\Config{S_2}{p_k , q_l }}}            
  \eda
          
\end{definition}

A synchronous communication step non-deterministically
selects matching communication partners from two select statements.
The sent value $v$ is immediately bound to variable $x$ as we consider
unbuffered channels here.
Programs $p_k$ and $q_l$ represent continuations for the respective
matching cases.
The communication effect is recorded in the trace
$\sndEvt{y}\conc\rcvEvt{y}$, which arbitrarily  places the send
before the receive communication.
We just assume this order (as switching the order yields
an equivalent, synchronous trace) and use it consistently in our formal development.

In the upcoming definition, we make use of the following helper operation:
$
  p \fatsemi q =
  \begin{cases}
    p  & q = \SKIP\\
    p ; q & \text{otherwise.}
  \end{cases}
 $
  Thus, one rule can cover the two cases
  that a \GO\ statement is final in a sequence or followed by another statement.
  If the \GO\ statement is final, the pattern $\GO\ p \fatsemi q$
  implies that $q$ equals $\SKIP$.
  See upcoming rule \rlabel{Fork}.
  Similar cases arise in the synchronous communication step.
  See upcoming rule \rlabel{Comm}. 

\begin{definition}[Program Execution
    $\semP{\Config S {p_1, \dots, p_n}}{T}{\Config{S'}{p'_1,\dots, p'_m}}$]
  \label{def:program-execution}
  \bda{c}
  \rlabel{Comm} \ \
  \myirule{\semP{\Config S {p_1,p_2}}{T}{\Config{S'}{p_1',p_2'}}}
          {\semP{\Config S {p_1\fatsemi p_1'' , p_2\fatsemi p_2'' ,
                p_3,  \dots, p_n}}{T}{
              \Config{S'}{p_1'\fatsemi p_1'', p_2'\fatsemi p_2'', p_3,
                \dots, p_n}}}
  \\ \\
  \rlabel{Step} \ \
  \myirule{\semC{S}{p_1}{p_1'}}
          {\semP{\Config S {p_1,\dots, p_n}}{\varepsilon}{\Config S {p_1',\dots, p_n}}}          
  \\ \\          
  \rlabel{Fork} \ \
  \semP{\Config S {\GO\ p_1 \fatsemi q_1, p_2 ,\dots, p_n}}{\varepsilon}{
    \Config S {p_1, q_1, p_2, \dots, p_n}}
  \\ \\
    \rlabel{Stop} \ \
  \semP{\Config S {\SKIP, p_2 ,\dots, p_n}}{\varepsilon}{
    \Config S {p_2, \dots, p_n}}
  \\ \\         
  \rlabel{Closure} \ \
  \myirule{
    \semP{\Angle{ S, P}}{T}{\Angle{S', P'}} \ \
    \semP{\Angle{ S', P'}}{T'}{\Angle{S'', P''}}
  }{
    \semP{\Angle{ S, P}}{T \conc T'}{\Angle{S'', P''}}
  }
  \eda
\end{definition}
Rule \rlabel{Comm} performs a synchronous communication step
whereas rule \rlabel{Step} executes a single step in one of the threads.
Rule \rlabel{Fork} creates a new thread.
Rule \rlabel{Stop} removes threads that have terminated.
Rule \rlabel{Closure} executes multiple program steps. It uses $P$ to
stand for a multiset of commands.

We are interested in identifying \emph{stuck programs} as
characterized by the following definition.

\begin{definition}[Stuck Programs]
\label{def:stuck-programs}  
  Let $\CalC = \Config S {p_1, \dots, p_n}$ where $n>1$ be some configuration which results from
  executing some program $p$.
  We say that $p$ is \emph{stuck} w.r.t.~$\CalC$ if each $p_i$ starts with a
  select statement\footnote{Recall that primitive send/receive
    communications are expressed in terms of select.} 
  and no reduction rules are applicable on $\CalC$.
  We say that $p$ is stuck if there exists a configuration $\CalC$
  such that $p$ is stuck w.r.t.~$\CalC$.
\end{definition}

A stuck program indicates that \emph{all} threads are \emph{asleep}.
This is commonly referred to as a \emph{deadlock}.
In our upcoming formal results, we assume that for technical reasons there must be at least two such threads.
Hence, a `stuck' program consisting of a single thread,
e.g.~$\snd{x}{\true}; \rcv{y}{x}$,
is not covered by the above definition.
Our implementation deals with programs in which only
a single or some of the threads are stuck.
%% MS: omit, discussed in implementation section
%% \pt{where?}
%% How to deal with such cases will be discussed later.

Our approach to detect deadlocks is
to (1) abstract the communication behavior of programs in terms of forkable behaviors,
and then (2) perform some behavioral analysis to uncover deadlocks.
The upcoming Section~\ref{sec:approximation} considers the abstraction.
The deadlock analysis is introduced in Section~\ref{sec:static-analysis}.

%--------------------------------------------------------
%--------------------------------------------------------

\section{Approximation via Forkable Behaviors}
\label{sec:approximation}

Forkable behaviors extend regular expressions with a fork operator
and thus allow for a straightforward and natural approximation
of the communication behavior of Mini-Go programs.

\begin{definition}[Forkable Behaviors \cite{DBLP:conf/lata/SulzmannT16}]
  The syntax of forkable behaviors (or behaviors for short) is defined as follows:
  \bda{lcl}
  r,s,t &::= & \phi \mid \varepsilon \mid  \alpha \mid r+s \mid r \conc s \mid r^* \mid \forkEff{r}
  \eda
  where $\alpha$ are symbols from a finite alphabet $\Sigma$.
\end{definition}

We find the common regular expression operators
for alternatives ($+$), concatenation ($\conc$),
repetition (${}^*$) and a new fork operator $\forkEff{}$.
We write $\phi$ to denote the empty language and
$\varepsilon$ to denote the empty word.

In our setting, symbols $\alpha$ are send/receive communications of
the form $\sndEvt{x}$ and $\rcvEvt{x}$, where $x$ is a channel name
(viz.\ Definition~\ref{def:communications}).
As we assume that there are only finitely many channels, we can guarantee
that the set of symbols $\Sigma$ is finite.

A program $p$ is mapped into a forkable behavior $r$
by making use of judgments $\approxP{p}{r}$.
The mapping rules are defined by structural induction
over the input $p$.
Looping constructs are mapped to Kleene star.
Conditional statements and select are mapped to alternatives
and a sequence of programs is mapped to some concatenated behaviors.

%% MS: omit
%% \pt{Could sell this as a type/effect system}
%% \ms{discuss in related work connection to Nielson/Nielson?}

\begin{definition}[Approximation
  $\approxP{p}{r}$]\label{def:approximation}
  \begin{mathpar}
    \approxP{\SKIP}{\varepsilon}

  \myirule{\approxP{p}{r} \ \ \approxP{q}{s}}
          {\approxP{\IF\ b \ \THEN\ p \ \ELSE\ q}{r + s}}
    
  \myirule{\approxP{p}{r}}
          {\approxP{\WHILE\ b \ \DO\ p}{r^*}}

  \myirule{\approxP{p}{r} \ \ \approxP{q}{s}}
          {\approxP{p ; q}{r \conc s}}

  \approxP{\rcv{x}{y}}{\rcvEvt{y}}

  \approxP{\snd{y}{b}}{\sndEvt{y}}

  \myirule{\approxP{e_i}{r_i}
           \ \ \approxP{p_i}{s_i}
           \ \ \mbox{for $i\in I$}
          }
          {\approxP{\SELECT\ [e_i \Rightarrow p_i]_{i\in I}}{\sum_{i \in I} \ r_i \conc s_i}}

  \myirule{\approxP{p}{r}}
          {\approxP{\GO\ p}{\forkEff{r}}}
  \end{mathpar}
\end{definition}
What remains is  to verify that the communication behavior of $p$
is safely approximated by $r$.
That is, we need to show that all traces resulting from executing $p$
are also covered by $r$.

A similar result appears already in the Nielsons' work \cite{DBLP:conf/popl/NielsonN94}.
However, there are significant technical differences
as we establish connections between
the traces resulting from program execution
to the trace-based language semantics for forkable behaviors
introduced in our prior work \cite{DBLP:conf/lata/SulzmannT16}.

In that work \cite{DBLP:conf/lata/SulzmannT16}, we give a semantic description
of forkable behaviors in terms of a language denotation $L(r,K)$.
Compared to the standard definition, we find
an additional component $K$ which represents a set of traces.
Thus, we can elegantly describe the
meaning of an expression $\forkEff{r}$ as the shuffling
of the meaning of $r$ with the `continuation' $K$.
To represent Kleene star in the presence of continuation $K$,
we use a fixpoint operation $\mu F$ that denotes the  least fixpoint
of $F$ in the complete lattice formed by the powerset of
$\Sigma^*$. Here, $F$ must be a monotone function on this lattice,
which we prove in prior work.

\begin{definition}[Shuffling]
The (asynchronous) shuffle $v \| w \subseteq \Sigma^*$ is the set of all
interleavings of words $v,w \in\Sigma^*$. It is
defined inductively by 
\begin{align*}
  \varepsilon \| w &= \{ w \}  &
  v \| \varepsilon &= \{ v \} &
  xv \| yw & = \{x\} \conc (v \| yw) \cup \{y\} \conc (xv \| w)
\end{align*}
The shuffle operation is lifted to languages by $L \| M = \bigcup \{ v \| w \mid v \in L, w \in M \}$.
\end{definition}  

\begin{definition}[Forkable Expression Semantics]
  For a trace language $K\subseteq \Sigma^*$, 
  the semantics of a forkable expression is defined inductively by
  \begin{displaymath}
    \begin{array}[t]{rl}
    L (\phi, K) &= \emptyset \\
    L (\varepsilon, K) &= K \\
    L (x, K) &= \{x \conc w \mid w \in K \} 
    \end{array}
    \qquad
    \begin{array}[t]{rl}
    L (r+s, K) &= L (r, K) \cup L (s, K) \\
    L (r \conc s, K) &= L (r, L (s, K)) \\
    L (r^*, K) &= \mu\, \lambda X. L (r, X) \cup K \\
    L (\forkEff r, K) &= L (r) \| K
    \end{array}
  \end{displaymath}
  As a base case, we assume 
  $L(r) = L (r, \{  \varepsilon \})$.
\end{definition}

Next, we show that when executing some program $p$
under some trace $T$,
the resulting program state can be approximated
by the left quotient of $r$ w.r.t.~$T$
where $r$ is the approximation of the initial program $p$.
This result serves two purposes.
(1) All communication behaviors found in a program
can also be found in its approximation.
(2) As left quotients can be computed via Brzozowski's derivatives \cite{321249},
we can employ his FSA methods for static analysis.
We will discuss the first point in the following.
The second point is covered in the subsequent section.

If $L_1$ and $L_2$ are sets of traces,
we write $\leftQ {L_1} L_2$ to denote the left quotient
of $L_2$ with $L_1$ where 
$\leftQ {L_1} L_2 = \{ w \mid \exists v\in L_1. v \conc w \in L_2  \}$.
We write $\leftQ x L_1$
as a shorthand for $\leftQ { \{ x \} } L_1$.
For a word $w$ we give the following inductive definition:
$\leftQ \varepsilon L = L$ and $\leftQ {x\conc w} L = \leftQ w (\leftQ x L)$.

To connect approximations of resulting programs to left quotients,
we introduce some matching relations which operate on behaviors.
To obtain a match we effectively rewrite a behavior into (parts of) some left quotient.
Due to the fork operation, we may obtain a multiset of (concurrent) behaviors
written $\Bag{r_1, ..., r_n}$. We sometimes use $R$ as a short-hand for
$\Bag{r_1, ..., r_n}$.
As in the case of program execution (Definition~\ref{def:program-execution}),
we introduce a helper operation
$
r \bigconc s =
\begin{cases}
  r  & s = \varepsilon \\
  r \conc s & \text{otherwise}
\end{cases}
$
to cover cases where a fork expression is either the final expression,
or possibly followed by another expression.
We write $\match{\cdot}{}{\cdot}$ as a short-hand
for $\match{\cdot}{\varepsilon}{\cdot}$.
We also treat $r$ and $\Bag{r}$ as equal.

\begin{definition}[Matching Relation]
  \label{def:matching-relation}
  \begin{mathpar}
    \ruleform{\match{r}{T}{s}}
    \\
    \rlabel{L} \ \match{r+s}{}{r}

    \rlabel{R} \ \match{r+s}{}{s}
    \\
    \rlabel{K$_n$} \ \match{r^*}{}{r \conc r^*}

    \rlabel{K$_0$} \ \match{r^*}{}{\varepsilon}

    \rlabel{X} \ \match{\alpha \conc r}{\alpha}{r}
    \\
    \rlabel{A1} \ \match{\varepsilon \conc r}{}{r}
    
    \rlabel{A2} \ \myirule{\match{r}{}{s}}{\match{r \conc t}{}{s \conc      t}}

    \rlabel{A3} \ \match{(r \conc s) \conc t}{}{r \conc (s \conc t)} 
  \end{mathpar}
  
\bda{c}
\ruleform{\match{\Bag{r_1,\dots, r_m}}{T}{\Bag{s_1, \dots , s_n}}}
\\ \\
\rlabel{F} \ \match{\Bag{\forkEff{r}\bigconc s, r_1,\dots,
    r_n}}{\varepsilon}{\Bag{s,r, r_1, \dots, r_n}}
\ \ \ \
\rlabel{C}
\ \myirule{
  \match{R}{T}{R'} \ \
  \match{R'}{T'}{R''}
}{
  \match{R}{T\conc T'}{R''}
}
\\ \\
\rlabel{S1} \ \myirule{\match{r}{T}{s}}
        {\match{\Bag{r, r_1,\dots, r_n}}{T}{\Bag{s, r_1, \dots, r_n}}}
\ \ \ \
\rlabel{S2} \ \match{\Bag{\varepsilon, r_1, \dots, r_n}}{\varepsilon}{\Bag{r_1,\dots, r_n}}
\eda       
\end{definition}

We establish some basic results for the approximation and matching relation.
The following two results show that matches are indeed left quotients.

\begin{proposition}
  \label{prop:re-match-sound}
  Let $r$, $s$ be forkable behaviors and $T$ be a trace such that $\match{r}{T}{s}$.
  Then, we find that $L(s) \subseteq \leftQ T L(r)$.
\end{proposition}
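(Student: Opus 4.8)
The plan is to proceed by induction on the derivation of $\match{r}{T}{s}$, using the single-behavior rules \rlabel{L}, \rlabel{R}, \rlabel{K$_n$}, \rlabel{K$_0$}, \rlabel{X}, \rlabel{A1}, \rlabel{A2}, \rlabel{A3}, together with the composition rule \rlabel{C} applied to the singleton case (recall that $r$ and $\Bag{r}$ are identified). The crucial observation is that the statement as given, phrased only through the base semantics $L(r) = L(r,\{\varepsilon\})$, is too weak to carry the induction through the concatenation rule \rlabel{A2}. I would therefore strengthen it to a claim about arbitrary continuations: for every trace language $K$, $\match{r}{T}{s}$ implies $L(s,K) \subseteq \leftQ{T}{L(r,K)}$. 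The proposition is then recovered as the special case $K = \{\varepsilon\}$.

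Before the induction I would record a few semantic facts, all either definitional or available from \cite{DBLP:conf/lata/SulzmannT16}: monotonicity of $L(r,-)$ in its continuation argument (needed already for the fixpoint in $L(r^*,K)$ to be well defined); the fixpoint unfolding identity $L(r^*,K) = L(r, L(r^*,K)) \cup K$; and the two standard laws for left quotients, namely monotonicity ($A \subseteq B$ implies $\leftQ{w}{A} \subseteq \leftQ{w}{B}$) and composition $\leftQ{T_1 \conc T_2}{L} = \leftQ{T_2}{\leftQ{T_1}{L}}$, the latter being exactly the inductive definition of $\leftQ{\cdot}{\cdot}$ on words.

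Most cases are then short equational checks. For \rlabel{L}/\rlabel{R} (where $T = \varepsilon$) I use $L(r+s,K) = L(r,K) \cup L(s,K)$, so each summand's language is contained in the whole. For \rlabel{X} I compute $L(\alpha \conc r, K) = \{\alpha \conc w \mid w \in L(r,K)\}$, whose quotient by $\alpha$ is $L(r,K)$. Cases \rlabel{A1} and \rlabel{A3} reduce to the definitional identity $L(r \conc s, K) = L(r, L(s,K))$ and associativity of this nesting. The two Kleene cases are where the unfolding identity does the work: for \rlabel{K$_n$} we get $L(r \conc r^*, K) = L(r, L(r^*,K)) \subseteq L(r^*,K)$, and for \rlabel{K$_0$} we get $L(\varepsilon, K) = K \subseteq L(r^*,K)$, both immediate from $L(r^*,K) = L(r, L(r^*,K)) \cup K$.

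The two substantive steps are \rlabel{A2} and \rlabel{C}, and these are exactly where the strengthening pays off. For \rlabel{A2}, from $\match{r}{\varepsilon}{s}$ I must derive $L(s \conc t, K) \subseteq L(r \conc t, K)$; unfolding concatenation this reads $L(s, L(t,K)) \subseteq L(r, L(t,K))$, which is precisely the induction hypothesis instantiated at the shifted continuation $K' = L(t,K)$. For \rlabel{C}, given $\match{r}{T_1}{u}$ and $\match{u}{T_2}{s}$, the two hypotheses yield $L(u,K) \subseteq \leftQ{T_1}{L(r,K)}$ and $L(s,K) \subseteq \leftQ{T_2}{L(u,K)}$; chaining them through monotonicity and the composition law gives $L(s,K) \subseteq \leftQ{T_2}{\leftQ{T_1}{L(r,K)}} = \leftQ{T_1 \conc T_2}{L(r,K)}$. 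The main obstacle is thus conceptual rather than computational: recognizing that the induction must be generalized over the continuation $K$ so that \rlabel{A2} closes, and that the Kleene star should be handled purely through the fixpoint unfolding rather than by reasoning about iterates directly. Note that the fork operator never needs to be opened here, since no single-behavior rule inspects $\forkEff{\cdot}$; it only enters as an opaque subterm under the composition law, so the shuffle semantics plays no role in this proof.
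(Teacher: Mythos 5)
Your proof is correct, but note that the paper itself gives no proof of Proposition~\ref{prop:re-match-sound}: the appendix only proves the multiset version (Proposition~\ref{prop:re-match-soundness}), and there it simply cites the single-behavior proposition as an already-established fact inside case \rlabel{S1}. So there is nothing in the paper to compare your argument against step by step; what you have written is a sound way to discharge the obligation the paper leaves implicit. Your key move --- strengthening the statement to $L(s,K)\subseteq \leftQ{T}{L(r,K)}$ for an arbitrary continuation $K$ --- is genuinely necessary: the unstrengthened claim at $K=\{\varepsilon\}$ does not close under rule \rlabel{A2}, since knowing $L(s)\subseteq L(r)$ does not by itself yield $L(s,L(t,\{\varepsilon\}))\subseteq L(r,L(t,\{\varepsilon\}))$, and instantiating the generalized induction hypothesis at $K'=L(t,K)$ is exactly what is needed. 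Your handling of the Kleene cases via the unfolding $L(r^*,K)=L(r,L(r^*,K))\cup K$, of \rlabel{X} via the quotient computation, and of composition via monotonicity of $\leftQ{w}{\cdot}$ together with $\leftQ{T_1\conc T_2}{L}=\leftQ{T_2}{\leftQ{T_1}{L}}$ all check out, as does the observation that $\forkEff{\cdot}$ is never decomposed by any single-behavior rule. One small point worth making explicit if you write this up: the composition rule \rlabel{C} is officially stated only for the multiset judgment, so you should say that you invoke it on singletons via the paper's identification of $r$ with $\Bag{r}$, which is what licenses traces $T$ of length greater than one in the single-behavior judgment at all.
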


\begin{proposition}
  \label{prop:re-match-soundness}
  Let $r_1$,...,$r_m$, $s_1$,...,$s_n$ be forkable behaviors and $T$ be a trace such that
  $\match{\Bag{r_1, \dots, r_m}}{T}{\Bag{s_1, \dots, s_n}}$.
  Then, we find that $L(s_1) \| ... \| L(s_n) \subseteq \leftQ T (L(r_1) \| ... \| L(r_m))$.
\end{proposition}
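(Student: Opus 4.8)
The plan is to prove the statement by induction on the derivation of the multiset matching judgment $\match{\Bag{r_1,\dots,r_m}}{T}{\Bag{s_1,\dots,s_n}}$, leaning on Proposition~\ref{prop:re-match-sound} for the single-behavior case and on a few algebraic facts about shuffle and left quotient. To lighten notation I would write $\mathcal{L}(\Bag{r_1,\dots,r_m})$ for $L(r_1)\|\cdots\|L(r_m)$, so that the goal for a match $\match{R}{T}{R''}$ becomes $\mathcal{L}(R'')\subseteq\leftQ{T}{\mathcal{L}(R)}$. Before the induction I would record the properties I need: (i) shuffle is commutative and associative (standard, and established in the prior work \cite{DBLP:conf/lata/SulzmannT16}), so the behaviors in a multiset may be reordered freely; (ii) shuffle is monotone, i.e.\ $A\subseteq B$ implies $A\|M\subseteq B\|M$; (iii) left quotient is monotone and composes as $\leftQ{T'}{(\leftQ{T}{L})}=\leftQ{T\conc T'}{L}$, which is immediate from the definition $\leftQ{T}{L}=\{w\mid T\conc w\in L\}$; and, crucially, (iv) the key lemma $(\leftQ{T}{L})\|M\subseteq\leftQ{T}{(L\|M)}$, proved by observing that prepending the whole of $T$ to an interleaving of some $w\in\leftQ{T}{L}$ and $m\in M$ yields one particular interleaving of $T\conc w$ and $m$, so that $\{T\}\conc(w\|m)\subseteq(T\conc w)\|m\subseteq L\|M$.

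With these in hand I would treat the four rules in turn. For rule \rlabel{F}, using the auxiliary identity $L(\forkEff{r}\bigconc s)=L(r)\|L(s)$ (checked separately for $s=\varepsilon$ and $s\neq\varepsilon$ against the clauses $L(\forkEff{r},K)=L(r)\|K$ and $L(r\conc s,K)=L(r,L(s,K))$) together with commutativity of shuffle, the two sides actually coincide, and the required inclusion is an equality since $\leftQ{\varepsilon}{L}=L$. Rule \rlabel{S2} is the same kind of computation, because $L(\varepsilon)=\{\varepsilon\}$ is a unit for shuffle. Rule \rlabel{S1}, where $\match{\Bag{r,r_1,\dots,r_n}}{T}{\Bag{s,r_1,\dots,r_n}}$ arises from $\match{r}{T}{s}$, is where the real content sits: I would first invoke Proposition~\ref{prop:re-match-sound} to obtain $L(s)\subseteq\leftQ{T}{L(r)}$, then apply monotonicity of shuffle (ii) to attach the common continuation $M=L(r_1)\|\cdots\|L(r_n)$, and finally apply the key lemma (iv) with $L=L(r)$ to land in $\leftQ{T}{(L(r)\|M)}$. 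For the composition rule \rlabel{C} I would apply the induction hypothesis to both premises, giving $\mathcal{L}(R')\subseteq\leftQ{T}{\mathcal{L}(R)}$ and $\mathcal{L}(R'')\subseteq\leftQ{T'}{\mathcal{L}(R')}$, and then use monotonicity of quotient together with the composition law (iii) to chain $\mathcal{L}(R'')\subseteq\leftQ{T'}{(\leftQ{T}{\mathcal{L}(R)})}=\leftQ{T\conc T'}{\mathcal{L}(R)}$.

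I expect the key lemma (iv) to be the heart of the argument, and it is worth stressing that it is a genuine inclusion rather than an equality: the quotient $\leftQ{T}{(L\|M)}$ may consume part of $T$ from the continuation side $M$, whereas $(\leftQ{T}{L})\|M$ forces all of $T$ to be matched on the $L$ side. For instance, with $L=\{b\}$, $M=\{a\}$ and $T=a$ one has $(\leftQ{a}{L})\|M=\emptyset$ but $\leftQ{a}{(L\|M)}=\{b\}$. This asymmetry is exactly what makes the proposition (and hence the approximation) sound in the stated direction: every interleaving reachable by a match is reachable as a left quotient, but not conversely. Once the algebraic lemmas (i)--(iv) and the semantic identity for $\forkEff{r}\bigconc s$ are in place, the remaining bookkeeping in the four cases is routine, so the only substantive work is establishing (iv).
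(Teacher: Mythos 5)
Your proposal is correct and follows essentially the same route as the paper: induction on the matching derivation, with Proposition~\ref{prop:re-match-sound} discharging the single-behavior premise of \rlabel{S1} and a fact about left quotients interacting with shuffle doing the rest. Your key lemma (iv) is just the word-level inclusion obtained by iterating the paper's symbol-level identity $\leftQ \alpha (L_1 \| L_2) = ((\leftQ \alpha L_1) \| L_2) \cup (L_1 \| (\leftQ \alpha L_2))$; beyond that you merely spell out the cases (\rlabel{S2}, \rlabel{C}, and the $\bigconc s$ continuation in \rlabel{F}) that the paper leaves implicit.
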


Finally, we establish that all traces resulting during program execution
can also be obtained by the match relation.
Furthermore, the resulting behaviors are approximations of the resulting programs.

\begin{proposition}
  \label{prop:sound-cmd-approx}
  If $\semC{S}{p}{q}$ and $\approxP{p}{r}$
  then $\match{r}{}{s}$ for some $s$ where $\approxP{q}{s}$.
\end{proposition}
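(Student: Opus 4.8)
The plan is to proceed by structural induction on the derivation of the command step $\semC{S}{p}{q}$, with a case analysis on the last rule applied. In each case I would first invert the approximation judgment $\approxP{p}{r}$. Since the rules defining $\approxP{\cdot}{\cdot}$ are syntax-directed, the shape of $p$ fixes the shape of $r$ uniquely, so inversion is immediate. The goal in each case is then to exhibit a suitable witness $s$, to derive $\match{r}{}{s}$ from one of the single-behavior matching rules, and to verify $\approxP{q}{s}$ by reassembling the approximation of $q$ from the components delivered by inversion. Note that command steps emit no communication, so throughout we only use the silent matching $\match{\cdot}{}{\cdot}$ and never rule (X) or the fork rule (F).

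The axiom cases are a direct rule-to-rule correspondence. For \rlabel{If-T} and \rlabel{If-F}, with $p = \IF\ b\ \THEN\ p_1\ \ELSE\ p_2$, inversion gives $r = r_1 + r_2$ where $\approxP{p_1}{r_1}$ and $\approxP{p_2}{r_2}$; since $q = p_1$ (resp.\ $q = p_2$) I take $s = r_1$ (resp.\ $s = r_2$) and appeal to rule (L) (resp.\ (R)). For \rlabel{While-F} and \rlabel{While-T}, with $p = \WHILE\ b\ \DO\ p_0$, inversion gives $r = r_0^*$ and $\approxP{p_0}{r_0}$: in the first, $q = \SKIP$ and I take $s = \varepsilon$, using (K$_0$) together with $\approxP{\SKIP}{\varepsilon}$; in the second, $q = p_0; \WHILE\ b\ \DO\ p_0$ and I take $s = r_0 \conc r_0^*$, using (K$_n$), the desired $\approxP{q}{s}$ following from the sequence and while approximation rules. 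For \rlabel{Skip}, with $p = \SKIP; p_1$, inversion yields $r = \varepsilon \conc r_1$ and $\approxP{p_1}{r_1}$, so rule (A1) gives $\match{\varepsilon \conc r_1}{}{r_1}$ with $s = r_1$ the approximation of $q = p_1$. For \rlabel{Assoc}, with $p = (p_1; p_2); p_3$, inversion yields $r = (r_1 \conc r_2) \conc r_3$, and rule (A3) supplies $s = r_1 \conc (r_2 \conc r_3)$, whose status as the approximation of $q = p_1;(p_2;p_3)$ follows from two applications of the sequence rule.

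The only case that invokes the induction hypothesis is \rlabel{Reduce}. Here $p = p_1; p_2$ steps to $q = p_1'; p_2$ via a subderivation $\semC{S}{p_1}{p_1'}$, and inversion gives $r = r_1 \conc r_2$ with $\approxP{p_1}{r_1}$ and $\approxP{p_2}{r_2}$. Applying the induction hypothesis to the subderivation and $\approxP{p_1}{r_1}$ yields some $s_1$ with $\match{r_1}{}{s_1}$ and $\approxP{p_1'}{s_1}$. Rule (A2) then lifts this to $\match{r_1 \conc r_2}{}{s_1 \conc r_2}$, so I set $s = s_1 \conc r_2$; the approximation $\approxP{p_1'; p_2}{s_1 \conc r_2}$ follows from the sequence rule.

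I expect no genuine obstacle: the argument is essentially a rule-by-rule pairing of the command semantics with the matching relation, and the only care required is in the inversion of the syntax-directed approximation judgment and in checking that the reassembled $s$ is again an approximation of $q$. The crux that makes the induction go through is the correspondence between \rlabel{Reduce} and rule (A2), which propagates a match under a left-nested concatenation; the remaining cases are closed by the matching of fixed behaviors.
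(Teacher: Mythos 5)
Your proposal is correct and follows essentially the same route as the paper's own proof: induction on the command-step derivation with the same rule-to-rule pairing (\rlabel{If-T}/\rlabel{If-F} with (L)/(R), \rlabel{While-F}/\rlabel{While-T} with (K$_0$)/(K$_n$), \rlabel{Skip} with (A1), \rlabel{Assoc} with (A3), and \rlabel{Reduce} with the induction hypothesis lifted by (A2)). No gaps.
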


\begin{proposition}
\label{prop:sound-approx}
  If $\semP{\Config S {p_1,...,p_m}}{T}{\Config {S'} {q_1,...,q_n}}$
  and $\approxP{p_i}{r_i}$ for $i=1,...,m$
  then $\match{\Bag{r_1,...,r_m}}{T}{\Bag{s_1,...,s_n}}$
  where   $\approxP{q_j}{s_j}$ for $j=1,..,n$.
\end{proposition}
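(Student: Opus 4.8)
The plan is to proceed by induction on the derivation of the program-execution judgment $\semP{\Config S {p_1,\dots,p_m}}{T}{\Config {S'}{q_1,\dots,q_n}}$ (Definition~\ref{def:program-execution}), with one case for each of the five rules \rlabel{Step}, \rlabel{Fork}, \rlabel{Stop}, \rlabel{Comm}, and \rlabel{Closure}. Throughout I rely on the fact that the approximation relation $\approxP{\cdot}{\cdot}$ is determined by the syntactic shape of the program (each rule in Definition~\ref{def:approximation} is driven by the top-level command constructor), so the intermediate approximations needed in the inductive cases are uniquely fixed. None of the reasoning inspects the state component $S$, which the behavior abstraction ignores.

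Before the case analysis I would record one auxiliary observation that glues the program-level helper $\fatsemi$ to the behavior-level helper $\bigconc$: namely $\approxP{p \fatsemi q}{r \bigconc s}$ whenever $\approxP{p}{r}$ and $\approxP{q}{s}$. This holds because the approximation yields a bare $\varepsilon$ exactly for $\SKIP$ — every other approximation rule produces a compound term or a single symbol rather than the empty word — so the syntactic case split $q = \SKIP$ defining $\fatsemi$ lines up precisely with the syntactic case split $s = \varepsilon$ defining $\bigconc$.

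The routine cases are then short. For \rlabel{Step}, Proposition~\ref{prop:sound-cmd-approx} supplies $\match{r_1}{}{s}$ with $\approxP{p_1'}{s}$, which I lift to the full multiset by rule \rlabel{S1}. For \rlabel{Stop}, the terminated thread is approximated by $\varepsilon$ (since $\approxP{\SKIP}{\varepsilon}$), so rule \rlabel{S2} removes it. For \rlabel{Fork}, the auxiliary observation shows the spawning thread $\GO\ p_1 \fatsemi q_1$ is approximated by $\forkEff{r}\bigconc s$, and rule \rlabel{F} applies directly, producing $s$ and $r$ that approximate the continuation $q_1$ and the spawned thread $p_1$. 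For \rlabel{Closure}, I chain the two induction hypotheses with the composition rule \rlabel{C}, feeding the intermediate approximations produced by the first hypothesis (which, by functionality, are the approximations of the intermediate configuration) into the second.

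The main obstacle is the \rlabel{Comm} case, where the underlying two-thread synchronous step is an instance of rule \rlabel{Sync}. Here the participating threads are $\SELECT\ [e_i \Rightarrow p_i]_{i\in I} \fatsemi p_1''$ and $\SELECT\ [f_j \Rightarrow q_j]_{j\in J} \fatsemi p_2''$, approximated via the auxiliary observation by $(\sum_{i} a_i \conc b_i) \bigconc c_1$ and $(\sum_{j} a_j' \conc b_j') \bigconc c_2$, where $\approxP{p_i}{b_i}$, $\approxP{q_j}{b_j'}$, and $\approxP{p_1''}{c_1}$, $\approxP{p_2''}{c_2}$. Since $e_k = \rcv{x}{y}$ forces $a_k = \rcvEvt{y}$ and $f_l = \snd{y}{b}$ forces $a_l' = \sndEvt{y}$, I first use the choice rules \rlabel{L}/\rlabel{R} (chained by \rlabel{C}, descending the binary tree of $+$ to reach the selected leaf) to project the receiver sum onto $\rcvEvt{y} \conc b_k$ and the sender sum onto $\sndEvt{y} \conc b_l'$; I push these selections past the trailing continuation with \rlabel{A2} and re-associate with \rlabel{A3} so that the leading communication symbol becomes exposed. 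Rule \rlabel{X} then consumes $\sndEvt{y}$ and afterwards $\rcvEvt{y}$, reproducing exactly the trace $\sndEvt{y} \conc \rcvEvt{y}$ emitted by \rlabel{Sync}. Lifting each single-behavior step to the two-element multiset by \rlabel{S1} and sequencing them by \rlabel{C} yields $\match{\Bag{\dots}}{\sndEvt{y}\conc\rcvEvt{y}}{\Bag{b_k \bigconc c_1, b_l' \bigconc c_2}}$, and $b_k \bigconc c_1$, $b_l' \bigconc c_2$ are precisely the approximations of the continuations $p_k \fatsemi p_1''$ and $q_l \fatsemi p_2''$. The delicate points are the bookkeeping of the $\fatsemi$/$\bigconc$ continuations and the re-association needed to expose the leading symbol, both of which are discharged by the unit/associativity rules \rlabel{A1}–\rlabel{A3}.
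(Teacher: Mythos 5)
Your proof is correct and follows essentially the same route as the paper's: induction on the execution derivation, with \rlabel{Step} discharged via Proposition~\ref{prop:sound-cmd-approx} and \rlabel{S1}, \rlabel{Stop} via \rlabel{S2}, \rlabel{Fork} via \rlabel{F}, and \rlabel{Closure} via \rlabel{C}. In fact you go somewhat further than the written proof, which silently omits the \rlabel{Comm} case and leaves the $\fatsemi$/$\bigconc$ compatibility implicit; your explicit treatment of both (projecting the selected branches with \rlabel{L}/\rlabel{R}, exposing the leading symbols via \rlabel{A2}/\rlabel{A3}, consuming them with \rlabel{X}, and lifting with \rlabel{S1} and \rlabel{C}) is exactly what the paper's argument requires.
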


%--------------------------------------------------------
%--------------------------------------------------------

\section{Static Analysis}
\label{sec:static-analysis}

Based on the results of the earlier section,
all analysis steps can be carried out on the forkable behavior
instead of the program text.
In this section, we first develop a `stuckness` criterion in terms of forkable behaviors to identify
programs with a potential deadlock.
Then, we consider how to statically check stuckness.

%--------------------------------------------------------
\subsection{Forkable Behavior Stuckness Criterion}

\begin{definition}[Stuck Behavior]
  We say that $r$ is \emph{stuck} if and only if
  there exists $\match{r}{T}{\varepsilon}$ for some
  non-synchronous trace $T$.
\end{definition}

Recall Definition~\ref{def:sync-trace} for a description of synchronous traces.

The following result shows that if the stuck condition does \emph{not} apply,
we can guarantee the absence of a deadlock.
That is, non-stuckness implies deadlock-freedom.

\begin{proposition}
 \label{prop:stuck} 
  Let $p$ be a stuck program and $r$ be a behavior
  such that  $\approxP{p}{r}$.
  Then, $r$ is stuck.
\end{proposition}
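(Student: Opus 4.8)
The plan is to run the program forward to its witnessing stuck configuration, transport that run into the matching relation via Proposition~\ref{prop:sound-approx}, and then append a short, deliberately unbalanced suffix that fires one pending communication from each of two blocked threads. First I would unfold Definition~\ref{def:stuck-programs}: since $p$ is stuck there is a configuration $\CalC = \Config S {p_1,\dots,p_n}$ with $n>1$, reached by some run $\semP{\Config S p}{T_0}{\CalC}$, in which every $p_i$ begins with a select and no rule of Definition~\ref{def:program-execution} applies to $\CalC$. Applying Proposition~\ref{prop:sound-approx} to this run, with the singleton left-hand side $\Bag r$, yields $\match{\Bag r}{T_0}{\Bag{s_1,\dots,s_n}}$ with $\approxP{p_i}{s_i}$. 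A separate observation I would record is that every execution trace is synchronous: rules \rlabel{Step}, \rlabel{Fork}, \rlabel{Stop} emit $\varepsilon$ and rule \rlabel{Comm} emits a single block $\sndEvt y \conc \rcvEvt y$, so an induction following \rlabel{Closure} shows that $T_0$ is a concatenation of such blocks, hence synchronous (Definition~\ref{def:sync-trace}) and of even length.

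The conceptual core is a structural reading of Definition~\ref{def:sync-trace}: up to the admitted equalities (associativity, $\varepsilon$-neutrality, and block-swap) a trace is synchronous iff, read left to right, it splits into consecutive complementary pairs $\gamma\conc\overline\gamma$; blocks have fixed length two, so this pairing is forced and cannot be rebalanced across a concatenation. Consequently, whenever $T_0$ is synchronous (hence even length) and I produce a suffix $U$ with $\match{\Bag{s_1,\dots,s_n}}{U}{\varepsilon}$, the concatenation $T_0\conc U$ is synchronous iff $U$ itself is; thus it suffices to make $U$ non-synchronous. Here I would exploit stuckness: because no instance of \rlabel{Comm}/\rlabel{Sync} fires on $\CalC$, no select case of one thread is the complement of a select case of another. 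Each $s_i$ is the approximation of a select, $s_i = \bigl(\sum_{k\in I_i} \beta_{ik}\conc t_{ik}\bigr)\conc u_i$ with $u_i$ the (possibly $\varepsilon$) continuation and each $\beta_{ik}$ a single send/receive symbol, so the leading symbols $\beta_1,\beta_2$ of the first cases of $s_1$ and $s_2$ satisfy $\beta_2 \neq \overline{\beta_1}$.

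I would then build $U = \beta_1 \conc \beta_2 \conc U''$ explicitly: focus on $s_1$ with \rlabel{S1}, select its first case with \rlabel{L}/\rlabel{R} under \rlabel{A2}, reassociate with \rlabel{A3}, and fire $\beta_1$ with \rlabel{X}; do the same for $s_2$; then reduce the entire remaining multiset (the two continuations, the untouched threads, and any behaviors spawned by \rlabel{F}) to $\varepsilon$, composing every step with \rlabel{C}. Because the first two symbols of $U$ are $\beta_1,\beta_2$ with $\beta_2\neq\overline{\beta_1}$, the forced first pair of $U$ fails to be complementary, so $U$ — and therefore $T_0\conc U$ — is non-synchronous. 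This gives $\match r {T_0\conc U}{\varepsilon}$ for a non-synchronous trace, i.e.\ $r$ is stuck.

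I expect the main obstacle to be the standing assumption in the last step that the remaining multiset can be driven to $\varepsilon$ at all. This needs an auxiliary lemma that every behavior multiset reduces to $\varepsilon$ — choose a branch at each $+$, exit each Kleene star via \rlabel{K$_0$}, split forks via \rlabel{F}, and fire or discard leading symbols and $\varepsilon$'s via \rlabel{X}, \rlabel{A1}, \rlabel{S2} — together with a termination argument for this strategy (forks spawn structurally smaller behaviors and loops are never unfolded). The one genuine caveat is the empty select, whose approximation is $\phi$ and admits no matching rule; I would discharge it by assuming, as is implicit for send/receive (which are unary selects), that every select carries at least one case, so $\phi$ never arises in an approximation.
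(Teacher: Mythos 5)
Your proposal follows essentially the same route as the paper's proof: run $p$ to its stuck configuration (noting the execution trace is synchronous), transport that run into the matching relation via Proposition~\ref{prop:sound-approx}, fire one pending communication from each of two blocked threads to obtain a non-complementary pair $\alpha\conc\beta$ with $\beta\neq\overline{\alpha}$, and complete the match to $\varepsilon$ using non-emptiness of the approximating languages --- the paper's Propositions~\ref{prop:prog-is-non-phi} and~\ref{prop:non-phi-yields-trace} play exactly the role of your auxiliary reduction lemma. You are somewhat more explicit about why appending the unbalanced pair keeps the overall trace non-synchronous and about the $\phi$/empty-select corner case, but the argument is the same.
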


The above result does not apply to stuck programs consisting of a single thread.
For example, consider
$p=\snd{x}{\true}; \rcv{y}{x}$
and $r = \sndEvt{x} \conc \rcvEvt{x}$
where $\approxP{p}{r}$.
Program $p$ is obviously stuck, however, $r$ is not stuck
because any matching trace for $r$ is synchronous. For example,
$\match{r}{\sndEvt{x}\conc\rcvEvt{x}}{\varepsilon}$.
Hence, Definition \ref{def:stuck-programs} assumes
that execution of program $p$ leads to some state
where all threads are asleep, so that we can construct
a non-synchronous trace for the approximation of $p$.

Clearly, the synchronous trace $\sndEvt{x}\conc\rcvEvt{x}$ is not observable
under any program run of $p$. Therefore, we will remove
such non-observable, synchronous traces from consideration.
Before we consider such refinements of our stuckness criterion,
we develop static methods to check for stuckness.

%--------------------------------------------------------
\subsection{Static Checking of Stuckness}

To check for stuckness, we apply an automata-based method where
we first translate the forkable behavior into an equivalent finite state machine (FSA)
and then analyze the resulting FSA for stuckness.
The FSA construction method for forkable behaviors
follows the approach described in our prior work \cite{DBLP:conf/lata/SulzmannT16}
where we build a FSA based on Brzozowski's derivative construction method \cite{321249}.

We say that a forkable behavior $r$ is \emph{well-behaved} if there is no fork inside a Kleene star expression.
The restriction to well-behaved behaviors guarantees finiteness (i.e.,
termination) of the automaton construction.

\begin{proposition}[Well-Behaved Forkable FSA \cite{DBLP:conf/lata/SulzmannT16}]
  Let $r$ be a well-behaved behavior.
  Then, we can construct an $\FSA{r}$ where the alphabet coincides
  with the alphabet of $r$ and states can be connected to behaviors such that
  (1) $r$ is the initial state and (2) for each non-empty trace $T=\alpha_1 \conc ... \conc \alpha_n$
  we find a path
   $r=r_0 \stackrel{\alpha_1}{\rightarrow} r_1 ... r_{n-1} \stackrel{\alpha_n}{\rightarrow} r_n$ in $\FSA{r}$
  such that  $\leftQ T L(r) = L(r_n)$.
\end{proposition}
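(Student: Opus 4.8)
The plan is to follow Brzozowski's recipe \cite{321249}: take the states of $\FSA{r}$ to be the iterated \emph{symbolic derivatives} of $r$, and let each symbol $\alpha$ label the transition from a state $t$ to its derivative $\deriv{t}{\alpha}$. First I would define a syntactic derivative operation $\deriv{\cdot}{\alpha}$ together with a nullability predicate $\nullable{\cdot}$ (testing whether $\varepsilon \in L(t)$), reusing the classical clauses for $\phi,\varepsilon,\alpha,+,\conc$ and ${}^*$ — in particular $\deriv{r \conc s}{\alpha} = \deriv{r}{\alpha}\conc s + \nullable{r}\conc\deriv{s}{\alpha}$ and $\deriv{r^*}{\alpha} = \deriv{r}{\alpha}\conc r^*$ — and adding one clause for the fork operator. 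Writing a standalone $\forkEff{r}$ as $\forkEff{r}\conc\varepsilon$, the clause is $\deriv{\forkEff{r}\conc s}{\alpha} = \forkEff{\deriv{r}{\alpha}}\conc s + \forkEff{r}\conc\deriv{s}{\alpha}$.

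The central semantic lemma I would establish is that each symbolic derivative computes the one-symbol left quotient exactly, i.e.\ $L(\deriv{t}{\alpha}) = \leftQ{\alpha}{L(t)}$ for every behavior $t$. This is an induction on the structure of $t$. The regular-expression cases are standard; the fork case is driven by the quotient identity for a shuffle, $\leftQ{\alpha}{(L \| M)} = (\leftQ{\alpha}{L} \| M) \cup (L \| \leftQ{\alpha}{M})$, which matches the two summands of the fork clause precisely once we recall that $L(\forkEff{r}\conc s) = L(r)\|L(s)$. Note that, unlike the matching relation (Proposition~\ref{prop:re-match-sound}), the derivative yields an \emph{equality} rather than an inclusion.

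Given this lemma, the automaton is defined by taking $r$ as the initial state and, for every reachable behavior $t$ and symbol $\alpha$, adding the edge $t \DStep{\alpha} \deriv{t}{\alpha}$; this gives property~(1) by construction. For property~(2) I would argue by induction on the length of $T = \alpha_1 \conc \dots \conc \alpha_n$: following the edges determined by the symbols yields the path $r = r_0 \DStep{\alpha_1} r_1 \dots \DStep{\alpha_n} r_n$ with $r_n = \deriv{r}{T}$, and iterating the semantic lemma gives $L(r_n) = \leftQ{T}{L(r)}$ via the inductive law $\leftQ{\alpha\conc w}{L} = \leftQ{w}{(\leftQ{\alpha}{L})}$. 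Since no derivative clause ever introduces a symbol absent from its argument, the alphabet of every state is contained in that of $r$, so the automaton's alphabet coincides with the alphabet of $r$.

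The main obstacle is \emph{finiteness} of the state set, i.e.\ termination of the construction, and this is exactly where the well-behaved restriction is needed. For pure regular expressions, Brzozowski's argument shows that the derivatives of $t$ range over a finite set once expressions are identified modulo associativity, commutativity and idempotence of $+$. The danger unique to forkable behaviors is the fork clause producing an unbounded number of concurrent fork components, which can happen only if a fork sits underneath a Kleene star, since $\deriv{r^*}{\alpha} = \deriv{r}{\alpha}\conc r^*$ reproduces $r^*$ and would thereby spawn a fresh copy of an enclosed fork on every unfolding. I would therefore prove, by induction on structure, that (a) the derivative of a well-behaved behavior is again well-behaved, and (b) for well-behaved $t$ the number of fork occurrences along each summand of any derivative $\deriv{t}{T}$ is bounded by the number of fork occurrences in $t$ (the bodies of stars being fork-free, their derivatives stay fork-free). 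With the fork skeleton thus bounded, each derivative is, modulo the ACI equivalence, a finite sum of terms whose fork contents and whose continuations are themselves derivatives of structurally smaller, well-behaved sub-behaviors; by the classical finiteness result these range over finite sets, so their finitely many combinations yield only finitely many distinct states. This bounds $\FSA{r}$ and completes the construction.
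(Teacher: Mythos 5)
Your overall strategy --- iterated Brzozowski derivatives as states, a one-symbol quotient lemma $L(\deriv{t}{\alpha}) = \leftQ{\alpha}{L(t)}$, and finiteness modulo ACI of $+$ from the no-fork-under-star restriction --- has exactly the shape of the construction in the cited prior work (the paper itself does not reprove this proposition but imports it). However, there is a genuine gap in your derivative definition: the classical concatenation clause $\deriv{r \conc s}{\alpha} = \deriv{r}{\alpha}\conc s + \nullable{r}\conc\deriv{s}{\alpha}$ is unsound as soon as the left operand contains a fork anywhere other than at the head of its spine, and your extra clause for $\forkEff{r}\conc s$ only covers the head position. Take $t = (\forkEff{a} + c)\conc b$; such behaviors do arise from the approximation, e.g.\ from a \GO\ inside a conditional followed by further code. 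Then $L(t) = L(\forkEff{a}+c, \{b\}) = \{ab, ba, cb\}$, so $\leftQ{b}{L(t)} = \{a\}$; but $\forkEff{a}+c$ is not nullable and $\deriv{\forkEff{a}+c}{b} = \forkEff{\phi}+\phi$ denotes $\emptyset$, so your clause returns a behavior with empty language and property (2) fails. The underlying point is that in the continuation semantics $L(r\conc s, K) = L(r, L(s,K))$ a fork in $r$ need not be ``used up'' before control passes to $s$: it stays alive and shuffles with $L(s,K)$, so a Boolean nullability test cannot record what survives of $r$.

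The fix, and what the cited construction actually does, is to replace $\nullable{\cdot}$ by a \emph{concurrent part} operator $\concPart{\cdot}$ returning the residual concurrent behavior of $r$ rather than just $\varepsilon$ or $\phi$ (so $\concPart{\forkEff{r}} = \forkEff{r}$, $\concPart{r+s} = \concPart{r}+\concPart{s}$, $\concPart{r\conc s} = \concPart{r}\conc\concPart{s}$, degenerating to $\varepsilon$/$\phi$ on fork-free expressions), giving $\deriv{r\conc s}{\alpha} = \deriv{r}{\alpha}\conc s + \concPart{r}\conc\deriv{s}{\alpha}$, which subsumes your special fork clause. With that repair your quotient lemma becomes provable, but note that the induction must also be generalized: since $L(r\conc s)$ is $L(r,L(s))$ and not $L(r)\conc L(s)$, a structural induction on the plain semantics $L(t)=L(t,\{\varepsilon\})$ does not go through in the concatenation case; you need the statement for $L(t,K)$ with an arbitrary continuation $K$ (or an equivalent decomposition lemma for the continuation semantics). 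Your finiteness argument is essentially right once the derivative is repaired, since $\concPart{\cdot}$ does not enlarge the fork skeleton.
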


The kind of FSA obtained by our method  \cite{DBLP:conf/lata/SulzmannT16}  guarantees that
all matching derivations (Definition~\ref{def:matching-relation})
which yield a non-trivial trace can also be observed in the FSA.

\begin{proposition}[FSA covers Matching]
\label{prop:fsa-covers-matching}  
  Let $r$ be a well-behaved behavior
  such that $\match{r}{T}{\Bag{s_1,...,s_m}}$ for some non-empty trace $T=\alpha_1\conc ... \conc \alpha_n$.
  Then, there exists a path
  $r=r_0 \stackrel{\alpha_1}{\rightarrow} r_1 ... r_{n-1} \stackrel{\alpha_n}{\rightarrow} r_n$ in $\FSA{r}$
  such that $L(s_1) \| ... \| L(s_m) \subseteq L(r_n)$.
\end{proposition}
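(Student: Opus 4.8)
The plan is to derive the statement by composing two results that are already in hand: the soundness of the multiset matching relation (Proposition~\ref{prop:re-match-soundness}) and the faithfulness of the derivative-based automaton (the Well-Behaved Forkable FSA proposition). No fresh combinatorial argument is needed; the whole content lies in aligning the hypotheses of the two results so that their conclusions chain together by transitivity.

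First I would read the single behavior $r$ as the singleton multiset $\Bag{r}$, using the convention that $r$ and $\Bag{r}$ are identified. Then the assumption $\match{r}{T}{\Bag{s_1,\dots,s_m}}$ is an instance of $\match{\Bag{r}}{T}{\Bag{s_1,\dots,s_m}}$, so Proposition~\ref{prop:re-match-soundness} (with a one-element left multiset) yields
\[
  L(s_1) \| \cdots \| L(s_m) \subseteq \leftQ{T}{L(r)} .
\]
Next, since $T = \alpha_1 \conc \cdots \conc \alpha_n$ is non-empty and $r$ is well-behaved, I would invoke the Well-Behaved Forkable FSA proposition for exactly this decomposition of $T$. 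It supplies a path $r = r_0 \stackrel{\alpha_1}{\rightarrow} r_1 \cdots r_{n-1} \stackrel{\alpha_n}{\rightarrow} r_n$ in $\FSA{r}$ whose final state satisfies $\leftQ{T}{L(r)} = L(r_n)$. Chaining the two facts gives $L(s_1) \| \cdots \| L(s_m) \subseteq \leftQ{T}{L(r)} = L(r_n)$, which is the desired conclusion.

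The step that deserves the most care is not a calculation but the reconciliation of two different notions of ``consuming'' $T$. The matching relation intersperses silent rewriting steps (rules \rlabel{L}, \rlabel{R}, \rlabel{K$_n$}, \rlabel{K$_0$}, \rlabel{F}, \rlabel{S2}, and the associativity rules, all labelled with $\varepsilon$) among the genuinely symbol-consuming steps of rule \rlabel{X}, whereas the FSA path consumes exactly the symbols $\alpha_1,\dots,\alpha_n$ one transition at a time. The clean way to avoid re-deriving the match structure is to observe that both imported results already speak purely in terms of the \emph{net} trace $T$: Proposition~\ref{prop:re-match-soundness} abstracts away the silent steps and relates only $T$ to the left quotient, and the FSA proposition is likewise stated over the bare symbol sequence $\alpha_1 \conc \cdots \conc \alpha_n$. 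Hence the same decomposition of $T$ feeds both results, and the indices of the $r_i$ line up with the symbols appearing in the match. The only additional hypothesis to check is the non-emptiness of $T$, which is exactly what the FSA proposition requires and which the statement duly assumes; with that in place the argument is a direct transitivity of the two inclusions, and I anticipate no genuine obstacle beyond this bookkeeping.
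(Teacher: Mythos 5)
Your proposal is correct and follows exactly the paper's own argument: apply Proposition~\ref{prop:re-match-soundness} to obtain $L(s_1) \| \cdots \| L(s_m) \subseteq \leftQ{T}{L(r)}$, then invoke property (2) of the well-behaved forkable FSA construction to get a path with $\leftQ{T}{L(r)} = L(r_n)$, and chain the two. Your extra remarks on reconciling the silent matching steps with the symbol-consuming transitions are sensible bookkeeping but do not change the route.
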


Based on above, we conclude that stuckness of a behavior
implies that the FSA is \emph{stuck} as well. That is, we encounter a non-synchronous path.

\begin{proposition}
  Let $r$ be a well-behaved behavior such that $r$ is stuck.
  Then, there exists a path
  $r=r_0 \stackrel{\alpha_1}{\rightarrow} r_1 ... r_{n-1} \stackrel{\alpha_n}{\rightarrow} r_n$ in $\FSA{r}$
  such that $L(r_i) \not = \{ \}$ for $i=1,...,n$ and
  $\alpha_1 \conc ... \conc \alpha_n$ is a non-synchronous trace.
\end{proposition}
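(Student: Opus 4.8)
The plan is to unfold the definition of stuckness and then feed the resulting match into Proposition~\ref{prop:fsa-covers-matching} (FSA covers Matching), after which the only remaining work is to propagate non-emptiness of the state languages backward along the obtained path.

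First I would record what stuckness gives us. Since $r$ is stuck, there is a non-synchronous trace $T$ with $\match{r}{T}{\varepsilon}$. Because $\varepsilon$ is itself synchronous (Definition~\ref{def:sync-trace}), a non-synchronous trace is necessarily non-empty, so I may write $T = \alpha_1 \conc \dots \conc \alpha_n$ with $n \geq 1$. Treating $\varepsilon$ as the singleton multiset $\Bag{\varepsilon}$ (recall $r$ and $\Bag{r}$ are identified), this match has exactly the shape required by Proposition~\ref{prop:fsa-covers-matching}. Applying that proposition to the non-empty trace $T$ yields a path $r = r_0 \stackrel{\alpha_1}{\rightarrow} r_1 \dots r_{n-1} \stackrel{\alpha_n}{\rightarrow} r_n$ in $\FSA{r}$ with $L(\varepsilon) \subseteq L(r_n)$. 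As $L(\varepsilon) = \{\varepsilon\}$ is non-empty, the terminal state satisfies $L(r_n) \neq \emptyset$.

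The crux is then to establish $L(r_i) \neq \emptyset$ for \emph{every} $i$, not just $i = n$. I would use that each transition $r_i \stackrel{\alpha_{i+1}}{\rightarrow} r_{i+1}$ of the Brzozowski-style automaton is a genuine derivative step, so $L(r_{i+1}) = \leftQ{\alpha_{i+1}}{L(r_i)}$. Fixing some $w \in L(r_n)$, a downward induction on $i$ from $n$ to $0$ shows that $\alpha_{i+1} \conc \dots \conc \alpha_n \conc w \in L(r_i)$: the base case is $w \in L(r_n)$, and in the step any $u \in L(r_{i+1}) = \leftQ{\alpha_{i+1}}{L(r_i)}$ satisfies $\alpha_{i+1} \conc u \in L(r_i)$ by the definition of the left quotient. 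Hence $L(r_i) \neq \emptyset$ for all $i = 0, \dots, n$, and in particular for $i = 1, \dots, n$. Since the symbols labelling the path spell out $\alpha_1 \conc \dots \conc \alpha_n = T$, which is non-synchronous by the choice of $T$, the path meets both requirements and the proof is complete.

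I expect the third step to be the main obstacle, since Proposition~\ref{prop:fsa-covers-matching} certifies non-emptiness only of the final state $r_n$ and says nothing about the intermediate states; one must rule out that some $r_i$ collapses to the empty language midway. The backward-propagation argument via the transition-local identity $L(r_{i+1}) = \leftQ{\alpha_{i+1}}{L(r_i)}$ is precisely what closes this gap, and it leans on the fact that the FSA transitions are derivative steps inherited from the construction of prior work rather than re-established here.
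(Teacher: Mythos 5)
Your proof is correct and follows exactly the route the paper intends (the paper itself offers no explicit proof beyond the remark ``Based on above, we conclude\dots''): unfold stuckness to get a non-synchronous, hence non-empty, match $\match{r}{T}{\varepsilon}$, feed it to Proposition~\ref{prop:fsa-covers-matching} to obtain the path with $L(r_n)\supseteq\{\varepsilon\}$, and note the trace is non-synchronous by construction. Your backward propagation of non-emptiness via $L(r_{i+1})=\leftQ{\alpha_{i+1}}{L(r_i)}$ is a detail the paper glosses over entirely, and it is handled correctly.
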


\begin{proposition}
  \label{prop:minimial-non-sync-path}
  Let $r$ be a well-behaved behavior such that $\FSA{r}$ is stuck.
  Then, any non-synchronous path that exhibits stuckness
  can be reduced to a non-synchronous path where a state appears at most twice along that path
\end{proposition}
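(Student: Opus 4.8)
The plan is to recast \emph{synchronicity} in positional terms and then argue by taking a shortest witnessing path. First I would observe that, by the grammar of Definition~\ref{def:sync-trace} together with the stated pair-commutation law, a trace $\alpha_1 \conc \cdots \conc \alpha_n$ is synchronous if and only if $n$ is even and every \emph{canonical pair} $(\alpha_{2j-1},\alpha_{2j})$ is complementary, i.e.\ $\alpha_{2j}=\overline{\alpha_{2j-1}}$. Complementarity is symmetric since $\overline{\overline{\alpha}}=\alpha$, so the commutation within each pair causes no ambiguity. Consequently a trace is \emph{non-synchronous} for exactly one of two reasons: either (F1) $n$ is odd, or (F2) $n$ is even but some canonical pair is non-complementary; call such a pair a \emph{witness}. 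Assigning to each state $r_i$ on a path its \emph{phase} $i \bmod 2$, the canonical pairing is entirely determined by these phases.

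The key technical device is a phase-preserving cycle removal. If a state $q$ occurs on the path at two positions $a<b$ of the \emph{same} phase, then $b-a$ is even and splicing out the segment between them (legal because $r_a=r_b=q$) deletes an even number of transitions. Hence every retained transition keeps its phase, the canonical pairing of the retained edges is unchanged, and all retained states still satisfy $L(r_i)\neq\{\}$; so the result is again a path of $\FSA{r}$ exhibiting stuckness \emph{provided} it is still non-synchronous. I will also use the following pigeonhole fact repeatedly: since there are only two phases, any state occurring three or more times has two occurrences of equal phase, so an even cycle is always available at an over-represented state.

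With these in hand I would take a non-synchronous stuck path $\pi$ of \emph{minimal length} and assume, for contradiction, that some state $q$ occurs at least three times. In flavor~(F1) any even-cycle removal keeps the length odd and therefore non-synchronous, so removing the even cycle guaranteed by the pigeonhole fact yields a strictly shorter non-synchronous stuck path, contradicting minimality. In flavor~(F2) I would fix a witness $(\alpha_{2j-1},\alpha_{2j})$ and try to choose an even cycle at $q$ whose removal leaves this witness, together with its phases, intact. In the generic situation the three occurrences of $q$ supply two disjoint candidate even cycles and the witness, occupying a single pair slot, lies in at most one of them, so a witness-avoiding even cycle can be selected; its removal retains the witness edges without re-pairing them across the cut, again producing a shorter non-synchronous stuck path and the desired contradiction. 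The residual configurations, where only one even cycle is available or it meets a witness edge, are handled using the second phase and the commutation law. Since the reduction terminates with every state occurring at most twice --- once per phase --- this is exactly the claimed bound.

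The \textbf{main obstacle} is precisely the witness-survival step in flavor~(F2): because an even-cycle removal re-pairs the edge immediately before the cut with the edge immediately after it, removing a cycle that contains or straddles the witness can re-pair its symbols and turn the entire trace synchronous, destroying the sole evidence of stuckness. The delicate case is when $q$ is itself the middle state $r_{2j-1}$ of the witness, so that every even cycle at $q$ touches one of the two witness edges; here one cannot simply avoid the witness and must instead relocate rather than lose it, exploiting the availability of the second phase (this is exactly why the bound is two and not one) and the commutation of canonical pairs. Making this selection argument airtight --- establishing that whenever a state is tripled there is \emph{some} even cycle, at \emph{some} over-represented state, whose removal preserves non-synchronicity --- is the crux on which the whole reduction rests.
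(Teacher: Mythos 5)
Your positional recasting of synchronicity (even length plus complementarity of every canonical pair $(\alpha_{2j-1},\alpha_{2j})$) is correct and is a reasonable lens for this proposition, and your treatment of the odd-length flavor (F1) goes through. The gap is that the even-length flavor (F2) is not actually proved: you explicitly defer ``the residual configurations, where only one even cycle is available or it meets a witness edge,'' and that is exactly where the content of the proposition lives. Two concrete problems. First, the claim that three occurrences of $q$ at positions $p_1<p_2<p_3$ ``generically'' supply two disjoint even cycles is wrong: the two edge-disjoint cycles $p_1\!\to\!p_2$ and $p_2\!\to\!p_3$ are both even only when all three positions share the same parity; in every other phase pattern exactly one of the three candidate cycles is even, so your ``generic'' branch is the exceptional one and the unproved branch is the normal one. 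Second, insisting on \emph{preserving a fixed witness pair} is a stronger requirement than the statement needs, and it is precisely what traps you in the unresolved case where the witness straddles the repeated state.

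The way to close the gap --- and essentially what the paper does --- is to argue over \emph{all} shortenings at once rather than protecting one witness: write the path as $r \stackrel{w_1}{\rightarrow} q \stackrel{w_2}{\rightarrow} q \stackrel{w_3}{\rightarrow} q \stackrel{w_4}{\rightarrow} t$ and show that if $w_1\conc w_4$, $w_1\conc w_2\conc w_4$ and $w_1\conc w_3\conc w_4$ were all synchronous then $w_1\conc w_2\conc w_3\conc w_4$ would be synchronous as well, contradicting the hypothesis; hence some strictly shorter path is non-synchronous and minimality finishes the argument. In your own vocabulary the missing piece is small but essential: if the total length is even and some cycle among the three is odd, removing it yields an odd-length and hence non-synchronous trace outright; if all three cycles are even and the only witness straddles $q$, then the three junction pairs created by the three possible removals cannot all be complementary while the straddling pair is non-complementary (the conditions force $\alpha_{p_1}=\alpha_{p_2}$ and then contradict the witness). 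Until this selection step is discharged, the reduction is asserted rather than proved; as written, your proof correctly names its crux but does not establish it.
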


  Based on the above, it suffices to consider minimal paths.
  We obtain these paths as follow.
  We perform a breadth-first traversal of the $\FSA{r}$ starting
  with the initial state $r$ to build up all paths which satisfy the following
  criterion: (1) We must reach a final state, and (2) a state may appear at most twice along a path.
  It is clear that the set of all such paths is finite and their length is finite.
  If among these paths we find a non-synchronous path, then the $\FSA{r}$ is stuck.

\begin{proposition}
  Let $r$ be a well-behaved behavior.
  Then, it is decidable if the $\FSA{r}$ is stuck.
\end{proposition}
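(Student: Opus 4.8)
The plan is to exhibit an explicit decision procedure and then argue both its termination and its correctness, leaning on the path-reduction result of Proposition~\ref{prop:minimial-non-sync-path}. First, since $r$ is well-behaved, the Well-Behaved Forkable FSA construction yields a genuinely \emph{finite} automaton $\FSA{r}$: a finite state set $Q$ (each state labeled by a behavior) over a finite alphabet. This finiteness is the foundation on which everything else rests, and it is precisely what well-behavedness buys us, since a fork beneath a Kleene star would in general spawn infinitely many derivative states.

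Second, I would recast the target property in searchable form. By definition, $\FSA{r}$ is stuck iff there is a path $r = r_0 \stackrel{\alpha_1}{\rightarrow} \cdots \stackrel{\alpha_n}{\rightarrow} r_n$ to a final state with every intermediate language nonempty and with label $\alpha_1 \conc \cdots \conc \alpha_n$ non-synchronous. Proposition~\ref{prop:minimial-non-sync-path} lets me restrict attention to witnesses in which each state of $Q$ occurs at most twice. Any such path then has at most $2|Q|$ vertices and hence length at most $2|Q|-1$, so the collection of candidate paths is finite and of bounded length. I would enumerate it by a breadth-first traversal from $r$, pruning a branch as soon as a state would occur a third time or the branch can no longer reach a final state; termination is immediate from the length bound. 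Restricting to paths that actually reach a final state also secures the intermediate-nonemptiness condition for free, since every state lying on such a path can reach acceptance and therefore has nonempty language.

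Third, for each enumerated path I must decide whether its label word $\alpha_1 \conc \cdots \conc \alpha_n$ is synchronous. This is a purely combinatorial, linear check: parsing a word against the grammar $T_s ::= \varepsilon \mid \alpha \conc \overline{\alpha} \mid T_s \conc T_s$ forces the aligned pairing $(\alpha_1,\alpha_2),(\alpha_3,\alpha_4),\dots$, because each grammar atom has length exactly two, and the trace equivalences (associativity, neutrality of $\varepsilon$, and swapping $\alpha\conc\overline{\alpha}$ with $\overline{\alpha}\conc\alpha$) do not enlarge this set of words. Hence the word is synchronous iff $n$ is even and $\alpha_{2i} = \overline{\alpha_{2i-1}}$ for every $i$, which is decidable. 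The procedure then answers \emph{stuck} precisely when some enumerated path carries a non-synchronous label, and \emph{not stuck} otherwise.

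It remains to certify correctness. Soundness is direct: a discovered path satisfies the defining existential for a stuck $\FSA{r}$ verbatim. Completeness is where the real content lies, and where I expect the only genuine obstacle: I must know that whenever \emph{some} stuck-witnessing path exists, a witness with each state appearing at most twice also exists, so that the bounded search cannot miss it. This is exactly Proposition~\ref{prop:minimial-non-sync-path}, whose delicate point (that excising a repeated-state loop preserves both reachability of a final state and the \emph{non}-synchrony of the residual label) I would take as already established. Granting that, every branch of the argument reduces to a finite, effective computation, and membership in the class of stuck well-behaved behaviors is therefore decidable.
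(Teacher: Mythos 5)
Your proposal is correct and follows essentially the same route as the paper: a breadth-first enumeration of the finitely many paths to a final state in which each state occurs at most twice, justified by Proposition~\ref{prop:minimial-non-sync-path}, followed by a synchronicity check on each path label. The extra details you supply (the explicit length bound, the linear test for synchronous words, and the observation that reaching a final state already guarantees nonemptiness of intermediate languages) merely make explicit what the paper leaves implicit.
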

%% \begin{proof}
%%   Based on Proposition~\ref{prop:minimial-non-sync-path} it suffices to consider minimal paths.
%%   We obtain these paths as follow.
%%   We perform a breadth-first traversal of the $\FSA{r}$ starting
%%   with the initial state $r$ to build up all paths which satisfy the following
%%   criteria: (1) We must reach a final state, and (2) a state may appear at most twice along a path.
%%   It is clear that the set of all such paths is finite and their length is finite.
%%   If among these paths we find a non-synchronous path, then the $\FSA{r}$ is stuck.
%% \qed
%% \end{proof}

Based on the above, we obtain a simple and straightforward to implement
method for static checking of deadlocks in Mini-Go programs.
Any non-synchronous path indicates a potential deadlock and due to the symbolic
nature of our approach, erroneous paths can be traced back to the program text for debugging purposes.

%--------------------------------------------------------
\subsection{Eliminating False Positives}
\label{sec:elminiation}

Naive application of the criterion developed in the
previous section yields many false positives. 
In our setting, a false positive is a
non-synchronous path that is present in the automaton $\FSA{r}$,
but which cannot be observed in any program run of $p$.
This section introduces an optimization to eliminate many false
positives. This optimization is integrated in our implementation.

For example, consider the forkable behavior
$r=\forkEff{\sndEvt{x} \conc \sndEvt{y}} \conc \rcvEvt{x} \conc \rcvEvt{y}$
resulting from the program
$p=\GO \ ( \snd{x}{\true} ; \snd{y}{\false}  ) ; \rcv{z}{x} ; \rcv{z}{y}$.
Based on our FSA construction method, we discover
the non-synchronous path
$
r \xrightarrow{\sndEvt{x} \conc \sndEvt{y} \conc \rcvEvt{x} \conc \rcvEvt{y}} \varepsilon
$
where $\varepsilon$ denotes some accepting state.
However, just by looking at this simple program it is easy to see that there is no deadlock.
There are two threads and for each thread, each program statement synchronizes with
the program statement of the other thread at the respective position.
That is, $\semP{\Config \_ {p}}{\sndEvt{x}\conc \rcvEvt{x} \conc \sndEvt{y} \conc \rcvEvt{y}}{\Config \_ {}}$.

So, a possible criterion to `eliminate' a non-synchronous path from consideration
seems to be to check if there exists an alternative synchronous permutation of this path. There are two cases where we need to be careful:
(1) Conditional statements and (2) inter-thread synchronous paths.

\paragraph{Conditional statements}

Let us consider the first case.
For example, consider the following variant of our example:
\bda{lcl}
r & = & \forkEff{\sndEvt{x} \conc \sndEvt{y}} \conc (\rcvEvt{x} \conc \rcvEvt{y} + \rcvEvt{y} \conc \rcvEvt{x})
\\
\\
p & = & \GO \ ( \snd{x}{\true} ; \snd{y}{\false}  ) ;
\\ && \IF\ \true \  \THEN\ (\rcv{z}{x} ; \rcv{z}{y}) \ \ELSE\ (\rcv{z}{y} ; \rcv{z}{x})
\eda
By examining the program text, we see that there is no deadlock as the
program will always choose the `if' branch.
As our (static) analysis conservatively assumes that both branches
may be taken,
we can only use a synchronous permutation to eliminate a non-synchronous path
if we do not apply any conditional statements along this path.
In terms of the matching relation from Definition \ref{def:matching-relation},
we can characterize the absence of conditional statements
if none of the rules \rlabel{L}, \rlabel{R}, \rlabel{K$_n$} and \rlabel{K$_0$}
has been applied.

\paragraph{Inter-thread synchronous paths}

The second case concerns synchronization within the same thread.
Consider yet another variant of our example:
\bda{c}
r=\forkEff{\sndEvt{x} \conc \rcvEvt{x}} \conc \sndEvt{y} \conc \rcvEvt{y}
\\
\\
p=\GO \ ( \snd{x}{\true} ; \rcv{z}{x}) ; \snd{y}{\false} ; \rcv{z}{y}
\eda
The above program will deadlock.
However, in terms of the abstraction, i.e.~forkable behavior, we find
that for the non-synchronous path there exists a synchronous permutation
which does not make use of any of the conditional matching rules,
e.g.~$\match{r}{\sndEvt{x}\conc\rcvEvt{x}\conc \sndEvt{y} \conc \rcvEvt{y}}{\Bag{}}$.
This is clearly not a valid alternative as for example $\sndEvt{x}$
and $\rcvEvt{x}$ result from the same thread.

To identify the second case, we assume that receive/send symbols $\alpha$
in a trace carry a distinct thread identifier (ID).
We can access the thread ID of each symbol $\alpha$ via
some operator $\threadID{\cdot}$.
Under our assumed restrictions (i.e., no forks inside of loops, which
is no $\GO$ inside a $\WHILE$ loop) it is straightforward
to obtain this information precisely.

We refine the approximation of a program's communication behavior
in terms of a forkable behavior such that communications carry additionally
the thread identification number.
  Recall that we exclude programs where there is a $\GO$ statement
  within a $\WHILE$ loop. Thus, the number of threads is statically known
  and thread IDs can be attached to communication symbols
  via a simple extension $\approxPID{p}{r}{i}$ of the relation $\approxP{p}{r}$.
  The additional component $i$ represents the identification number of the current thread.
  We start with $\approxPID{p}{r}{0}$ where $0$ represents the main thread.
 We write symbol $\sndEvtID{x}{i}$ to denote a transmission over channel $x$
which takes place in thread $i$.
Similarly, symbol $\rcvEvtID{x}{i}$ denotes reception over channel $x$ in thread $i$.
For each symbol, we can access the thread identification number via operator
$\threadID{\cdot}$ where $\threadID{\sndEvtID{x}{i}} = i$
and $\threadID{\rcvEvtID{x}{i}} = i$.

The necessary adjustments to Definition~\ref{def:approximation}
are as follows.

  \begin{mathpar}
    \approxP{\SKIP}{\varepsilon}{i}

  \myirule{\approxP{p}{r}{i} \ \ \approxP{q}{s}{i}}
          {\approxP{\IF\ b \ \THEN\ p \ \ELSE\ q}{r + s}{i}}
    
  \myirule{\approxP{p}{r}{i}}
          {\approxP{\WHILE\ b \ \DO\ p}{r^*}{i}}

  \myirule{\approxP{p}{r}{i} \ \ \approxP{q}{s}{i}}
          {\approxP{p ; q}{r \conc s}{i}}

  \approxP{\rcv{x}{y}}{\rcvEvtID{y}{i}}{i}

  \approxP{\snd{y}{b}}{\sndEvtID{y}{i}}{i}

  \myirule{\approxP{e_i}{r_i}{i}
           \ \ \approxP{p_i}{s_i}{i}
           \ \ \mbox{for $i\in I$}
          }
          {\approxP{\SELECT\ [e_i \Rightarrow p_i]_{i\in I}}{\sum_{i \in I} \ r_i \conc s_i}}{i}

  \myirule{\approxP{p}{r}{i+1}}
          {\approxP{\GO\ p}{\forkEff{r}}}{i}
  \end{mathpar}

We summarize our observations.

\begin{definition}[Concurrent Synchronous Permutation]
  Let $T_1$ and $T_2$ be two traces.
  We say that $T_1$ is a \emph{concurrent synchronous permutation} of $T_2$ iff
  (1) $T_1$ is a permutation of the symbols in $T_2$,
  (2) $T_1$ is a synchronous trace of the form
  $\alpha_1 \conc \overline{\alpha_1}\conc ... \conc \alpha_n \conc \overline{\alpha_n}$
  where $\threadID{\alpha_i} \not= \threadID{\overline{\alpha_i}}$
  for $i=1,...,n$.
\end{definition}

\begin{proposition}[Elimination via Concurrent Synchronous Permutation]
  Let $p$ be a program.
  Let $r$ be a well-behaved behavior such that $\approxP{p}{r}$.
  For any non-synchronous path $T$ in $\FSA{r}$, 
  there exists a synchronous path $T_1$, a non-synchronous path $T_2$
  and a concurrent synchronous permutation $T_3$ of $T_2$
  such that
    $\match{r}{T_1}{\Bag{r_1,...,r_m}}$,
  $\match{\Bag{r_1,...,r_m}}{T_2}{\Bag{}}$, and
  $\match{\Bag{r_1,...,r_m}}{T_3}{\Bag{}}$
  where in the last match derivation
  none of the rules \rlabel{L}, \rlabel{R}, \rlabel{K$_n$} and \rlabel{K$_0$}
  have been applied.
  Then, program $p$ is not stuck.
\end{proposition}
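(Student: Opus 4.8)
The plan is to argue by contradiction, reducing the statement to a \emph{realizability} (completeness) lemma that is the converse of the soundness Propositions~\ref{prop:sound-cmd-approx} and~\ref{prop:sound-approx}. Concretely, I would assume that $p$ \emph{is} stuck and derive a contradiction from the hypothesised decomposition. By Definition~\ref{def:stuck-programs} there is then a reachable configuration $\CalC = \Config S {q_1,\dots,q_k}$ with $k>1$ in which every $q_i$ is headed by a select and no reduction rule of Definition~\ref{def:program-execution} applies. By Proposition~\ref{prop:stuck} the approximation $r$ is stuck, and via the FSA results (the proposition relating stuck behaviors to non-synchronous FSA paths together with Proposition~\ref{prop:minimial-non-sync-path}) this stuckness is witnessed by a concrete non-synchronous path $T$ in $\FSA{r}$ running through $\CalC$. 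Applying the hypothesis to this $T$ yields a split $T = T_1 \conc T_2$ with $\match{r}{T_1}{\Bag{r_1,\dots,r_m}}$, a concurrent synchronous permutation $T_3$ of $T_2$, and $\match{\Bag{r_1,\dots,r_m}}{T_3}{\Bag{}}$ using none of \rlabel{L}, \rlabel{R}, \rlabel{K$_n$}, \rlabel{K$_0$}.

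The crux is the following realizability lemma, which I would prove first: if $\Bag{r_1,\dots,r_m}$ approximates a reachable configuration $\CalC'$ (i.e.\ $\approxPID{q'_j}{r_j}{i_j}$ for the threads $q'_j$ of $\CalC'$) and $\match{\Bag{r_1,\dots,r_m}}{T_3}{\Bag{}}$ where $T_3 = \alpha_1\conc\overline{\alpha_1}\conc\cdots\conc\alpha_n\conc\overline{\alpha_n}$ is a concurrent synchronous permutation whose derivation uses no conditional/loop rule, then there is a program execution $\semP{\CalC'}{T_3}{\Config{S'}{}}$. The proof goes by induction on $n$. The two side conditions are exactly what makes the converse of Proposition~\ref{prop:sound-approx} go through: since each rendezvous pair satisfies $\threadID{\alpha_i}\neq\threadID{\overline{\alpha_i}}$, the two matched symbols originate from \emph{distinct} threads carrying complementary send/receive actions on the same channel, which is precisely the premise of rule \rlabel{Sync} and hence of \rlabel{Comm}; and since \rlabel{L}, \rlabel{R}, \rlabel{K$_n$}, \rlabel{K$_0$} are never used, the matching derivation never fabricates a branch or loop unrolling that the program's control flow does not actually take, so every consumed symbol corresponds to a genuine leading communication of a thread. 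The remaining matching rules map directly onto program steps: \rlabel{F} onto \rlabel{Fork}, \rlabel{S2} onto \rlabel{Stop}, and the bookkeeping rules \rlabel{A1}--\rlabel{A3} and \rlabel{S1} onto \rlabel{Reduce}/\rlabel{Step} together with the $\fatsemi$/$\bigconc$ correspondence.

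With the lemma established, the conclusion follows. The synchronous prefix $T_1$ determines a reachable configuration $\CalC'$ approximated by $\Bag{r_1,\dots,r_m}$, and the lemma applied to $T_3$ exhibits a terminating rendezvous sequence $\semP{\CalC'}{T_3}{\Config{S'}{}}$ from $\CalC'$; in particular its first pair $\alpha_1\conc\overline{\alpha_1}$ is an applicable \rlabel{Comm} step, so $\CalC'$ is \emph{not} stuck. Since any would-be stuck configuration gives rise to such a non-synchronous path, and the hypothesis eliminates \emph{every} such path, no reachable configuration at which all threads are select-headed can fail to reduce, which by Definition~\ref{def:stuck-programs} is exactly the statement that $p$ is not stuck. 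I expect the realizability lemma to be the main obstacle, and within it two points: first, re-synchronising the two bookkeeping views — the concurrent $\Bag{\cdot}$ multiset manipulated by the matching relation versus the $\fatsemi$-sequenced threads of Definition~\ref{def:program-execution} — so that consuming a symbol in the matching derivation always corresponds to a leading communication of an actual thread; and second, proving that the thread-ID condition together with the exclusion of \rlabel{L}, \rlabel{R}, \rlabel{K$_n$}, \rlabel{K$_0$} is sufficient to guarantee an applicable \rlabel{Sync} premise at every step, thereby ruling out exactly the two false-positive phenomena (intra-thread ``synchronisation'' and fabricated branch choices) identified earlier in this section.
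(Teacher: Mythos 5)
The paper states this proposition without proof --- it is not among the results established in the appendix --- so there is no official argument to measure your proposal against; I can only assess it on its own terms. Your overall strategy (contrapositive plus a realizability/completeness lemma converting the restricted match derivation for $T_3$ back into an actual program execution) is the natural one, and you are right that such a lemma is the crux. But as written the argument has a genuine gap at the point where you pass from the decomposition $T = T_1 \conc T_2$ to ``a reachable configuration $\CalC'$ approximated by $\Bag{r_1,\dots,r_m}$''. The hypothesis only restricts the derivation of the \emph{last} match (the one for $T_3$); the derivation $\match{r}{T_1}{\Bag{r_1,\dots,r_m}}$ may freely use \rlabel{L}, \rlabel{R}, \rlabel{K$_n$} and \rlabel{K$_0$}, so $\Bag{r_1,\dots,r_m}$ can encode branch and loop choices that no actual execution of $p$ makes, and nothing forces the split point $T_1$ to coincide with the synchronous trace that reaches the assumed stuck configuration $\CalC$. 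All of the paper's machinery (Propositions~\ref{prop:sound-cmd-approx} and~\ref{prop:sound-approx}) goes from executions to matches, never back; the inverse direction you need --- that a match for $T_1$ is realized by an execution landing in a configuration whose threads are approximated exactly by $r_1,\dots,r_m$ --- is unavailable and is in general false for an over-approximation. To refute stuckness of $p$ you must show that the \emph{specific} reachable configuration $\CalC$ can take a step, which requires aligning $\Bag{r_1,\dots,r_m}$ with the heads of $\CalC$'s threads; the mere existence of some synchronous $T_1$-match does not provide this alignment.

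Second, the realizability lemma itself --- effectively the entire content of the proposition --- is asserted rather than proved. The two side conditions you cite (distinct thread IDs per rendezvous, exclusion of the conditional rules) are indeed necessary, and you correctly connect them to the two false-positive phenomena discussed in the section, but showing they are \emph{sufficient} for an applicable \rlabel{Sync}/\rlabel{Comm} step requires proving that consuming $\alpha_1$ and $\overline{\alpha_1}$ from two members of the multiset implies that two corresponding \emph{live} threads of $\CalC'$ are select-headed with complementary communications on the same channel at that moment. The matching multiset does not record which $r_j$ belongs to which thread of the configuration, nor whether rule \rlabel{F} has been applied in step with \rlabel{Fork}; moreover a genuine multi-case \SELECT\ is abstracted with $+$, so reducing it in the $T_3$-derivation would itself require \rlabel{L}/\rlabel{R}, which you have excluded --- your lemma thus silently restricts $T_3$ to unary selects. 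These points need to be worked out explicitly (for instance by strengthening the induction hypothesis to carry a correspondence between multiset members and threads, along the lines of the thread-ID annotation $\approxPID{p}{r}{i}$) before the proof can be considered complete.
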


The `elimination' conditions in the above proposition
can be directly checked in terms of the $\FSA{r}$.
Transitions can be connected to matching rules.
This follows from the derivative-based FSA construction.
Hence, for each non-synchronous path in $\FSA{r}$ we can
check for a synchronous alternative. We simply consider
all (well-formed) concurrent synchronous permutations and
verify that there is a path which does not involve conditional transitions.

A further source for eliminating false positives is to distinguish
among nondeterminism resulting from selective communication
and nondeterminism due to conditional statements.
For example, the following programs yield
the same (slightly simplified) abstraction
\bda{c}
 r=\forkEff{\sndEvt{x}} \conc (\rcvEvt{x} + \rcvEvt{y})
 \\
 \\
 p_1 = \GO\ \snd{x}{\true} ; \SELECT\ [\rcv{z}{x} \Rightarrow \SKIP, \rcv{z}{y} \Rightarrow \SKIP]
 \\
 \\
  p_2 = \GO\ \snd{x}{\true} ; \IF \true \ \THEN\ \rcv{z}{x} \Rightarrow \ \ELSE\ \rcv{z}{y}
  \eda
  It is easy to see that there is a non-synchronous path,
  e.g.~$r \xrightarrow{\sndEvt{x} \conc \rcvEvt{y}} \varepsilon$.
  Hence, we indicate that the program from which this forkable behavior resulted may get stuck.
  In case of $p_1$ this represents a false positive because the non-synchronous path will
  not be selected.

The solution is to distinguish between both types of nondeterminism
by abstracting the behavior of $\SELECT$ via some new operator $\oplus$ instead of $+$.
We omit the straightforward extensions to Definition~\ref{def:approximation}.
In terms of the matching relation, $+$ and $\oplus$ behave the same.
The difference is that for $\oplus$ certain non-synchronous behavior
can be safely eliminated.

Briefly, suppose we encounter a non-synchronous path where the (non-synchronous) issue
can be reduced to
$\match{\Bag{\alpha_1 \oplus ... \oplus \alpha_n, \beta_1 \oplus ... \oplus \beta_m}}{\alpha_i \conc \beta_j}{\Bag{}}$
for some $i\in\{1,...,n\}$ and $j\in\{1,...,m\}$
where $\alpha_i \conc \beta_j$ is non-synchronous.
Suppose there exists $l\in\{1,...,n\}$ and $k\in\{1,...,m\}$
such that
$\match{\Bag{\alpha_1 \oplus ... \oplus \alpha_n, \beta_1 \oplus ... \oplus \beta_m}}{\alpha_l \conc \beta_k}{\Bag{}}$
and $\alpha_l \conc \beta_k$ is synchronous.
Then, we can eliminate this non-synchronous path.
The reason why this elimination step is safe is
due to rule \rlabel{Sync} in Definition \ref{def:sync-comm}.
This rule guarantees that we will always synchronize if possible.
As in case of the earlier `elimination' approach, we can directly check the $\FSA{r}$
by appropriately marking transitions due to $\oplus$.

Further note that to be a safe elimination method,
we only consider $\SELECT$ statements where case bodies are trivial, i.e.~equal $\SKIP$.
Hence, we find $\alpha_i$ and $\beta_j$ in the above instead of arbitrary behaviors.
Otherwise, this elimination step may not be safe.
For example, consider
\bda{c}
r = \forkEff{\sndEvt{x} \conc \sndEvt{y}} \conc (\rcvEvt{x} \conc \rcvEvt{y} \oplus \rcvEvt{x} \conc \rcvEvt{x})
\\
\\
p = \GO\ (\snd{x}{\true}; \snd{y}{\false}) ;  \SELECT\ [\rcv{z}{x} \Rightarrow \rcv{z}{y}, \rcv{z}{x} \Rightarrow \rcv{z}{x}]
\eda
Due to the non-trivial case body $\rcv{z}{x}$ we encounter a non-synchronous path which
cannot be eliminated.

%% MS: above subsumes the below
%% 
%% \pt{A simple way to eliminate false positives} would be to attach a
%% unique mark to each fork in the original program / its behavior,
%% propagate this mark to all communications in the same logical thread, and
%% then only accept matches with different marks.
%% 
%% \pt{Applying this strategy to the examples}
%% \begin{verbatim}
%% Fork !1x . Fork !2x . ?0x . ?0x
%% \end{verbatim}
%% Legal traces
%% \begin{verbatim}
%% !1x . ?0x . !2x . ?0x
%% !2x . ?0x . !1x . ?0x
%% \end{verbatim}
%% 
%% Does not help here, though, because it exhibits another problem
%% (internal choice vs internal choice):
%% \begin{verbatim}
%% Fork !1x . (?0x + ?0z)
%% \end{verbatim}

%--------------------------------------------------------
%--------------------------------------------------------

\section{Experimental Results}
\label{sec:evaluation}

%--------------------------------------------------------
\subsection{Implementation}

We have built a prototype of a tool that implements our approach,
referred to as {\gopherlyzer} \cite{gopherlyzer}.
Our analysis operates on the Go source language where
we make use of the oracle tool \cite{golang-oracle} to obtain
(alias) information to identify matching channel names.
We currently assume that all channels are statically known 
and all functions can be inlined.
The implementation supports \SELECT\ with default cases, something which
we left out in the formal description for brevity. Each default case
is treated as an empty trace $\varepsilon$.

Go's API also contains a \CLOSE\ operation for channels.
Receiving from a closed channel returns a default value whereas
sending produces an error.
An integration of this feature in our current implementation
is not too difficult but left out for the time being.
%% Our trace-based analysis treats
%% closing of a channel as follows.
%% Each \CLOSE\ operation can be connected to some FSA state.
%% In the (minimal) paths which we derive from such states,
%% we simply remove all subsequent symbols $\rcvEvt{x}$.
\onlyFinal{The technical report provides further details.}

\Gopherlyzer\ generates the FSA `on-the-fly' while processing the program.
It stops immediately when encountering a deadlock.
We also aggressively apply the `elimination' methods
described in Section~\ref{sec:elminiation} to reduce the size of the FSA.
When encountering a deadlock, the tool reports a minimal trace to highlight
the issue. We can also identify stuck threads by checking
if a non-synchronous communication pattern arises for this thread.
Thus, we can identify situations
where the main thread terminates but some local thread is stuck.
The reported trace could also be used to replay the synchronization steps
that lead to the deadlock. We plan to integrate
extended debugging support in future versions of our tool.

%--------------------------------------------------------
\subsection{Examples}

For experimentation, we consider the examples \texttt{deadlock}, \texttt{fanin},
and \texttt{primesieve}  from Ng and Yoshida \cite{DBLP:conf/cc/NgY16}.
To make \texttt{primesieve} amenable to our tool, we moved the dynamic creation of channels
outside of the (bounded) for-loop.
Ng and Yoshida consider two further examples:
\texttt{fanin-alt} and \texttt{htcat}.
We omit \texttt{fanin-alt} because our current implementation does not support
closing of channels. To deal with \texttt{htcat} we need to extend our
frontend to support
certain syntactic cases.
In addition, we consider the examples \texttt{sel} and \texttt{selFixed}
from Section~\ref{sec:highlights}
as well as \texttt{philo} which is a simplified implementation
of the dining philosophers problem where we assume
that all forks are placed in the middle of the table.
As in the original version, each philosopher requires two forks for eating.
All examples can be found in the \gopherlyzer\ repository \cite{gopherlyzer}.

%--------------------------------------------------------
\subsection{Experimental results}

\subsubsection*{Comparison with dingo-hunter \cite{DBLP:conf/cc/NgY16}}

\begin{table}[ht]
\begin{center}
  \begin{tabular}{l|l|l|l|l|l|lr|lr}
    Example & LoC & Channels & Goroutines & Select & Deadlock &
                                                                \multicolumn{2}{|l}{dingo-hunter}
    & \multicolumn{2}{|l}{ \gopherlyzer} \\
    &&&&&&result&time&result&time\\
    \hline
    deadlock & 34 & 2 & 5 & 0 & true & true & 155 & true & 21 \\
    fanin   &  37 & 3 & 4 & 1 & false & false & 107 & false & 29 \\
    primesieve & 57 & 4 & 5 & 0 & true & true & 8000 & true & 34 \\
    philo      & 34 & 1 & 4 & 0 & true & true & 480 & true & 31 \\
    sel        & 25 & 4 & 4 & 0 & true & true & 860 & true & 24 \\
    selFixed   & 25 & 2 & 2 & 2 & false & false & 85 & false & 30
  \end{tabular}
\end{center}
\caption{Experimental results. All times are reported in ms}
\label{tab:experimental-results}    
\end{table}

% We first compare our tool against the dingo-hunter tool \cite{DBLP:conf/cc/NgY16}.
For each tool we report analysis results and the overall time used
to carry out the analysis.
Table~\ref{tab:experimental-results}  summarizes our results
which were carried out on some commodity hardware
(Intel i7 3770 @ 3.6GHz, 16 GB RAM, Linux Mint 17.3).

Our timings for dingo-hunter are similar to the reported
results \cite{DBLP:conf/cc/NgY16}, but it takes significantly
longer to analyze our variant of \texttt{primesieve}, where
we have unrolled the loop. 
There is also significant difference between \texttt{sel} and \texttt{selFixed}
by an order of magnitude.
A closer inspection shows that the communicating finite state machines (CFSMs)
generated by dingo-hunter
can grow dramatically in size with the number of threads and channels used.

The analysis time for our tool is always significantly faster
(between 3x and 235x with a geometric mean of 17x).
% One reason is that we operate directly on the source language
% whereas dingo-hunter operates
% on the intermediate representation of Go programs. %% \cite{golang-ssa},
Judging from the dingo-hunter paper, the tool requires several transformation
steps to carry out the analysis, which seems rather time consuming.
In contrast, our analysis requires a single pass over the forkable behavior
where we incrementally build up the FSA to search for non-synchronous paths.

Both tools report the same analysis results.
We yet need to conduct a more detailed investigation but
it seems that both approaches are roughly similar
in terms of expressive power. However, there are some
corner cases where our approach appears to be superior.

Consider the following (contrived) examples in Mini-Go notation:
$(\GO\ \snd{x}{\true}) ; \rcv{y}{x}$
and
$\rcv{y}{x} ; (\GO\ \snd{x}{\true})$.
Our tool reports that the first example is deadlock-free
but the second example may have a deadlock.
The second example is out of scope of the dingo-hunter because it
requires all go-routines to be created before any communication takes
place. Presently, dingo-hunter does not seem to check this restriction
because it reports the second example as deadlock-free.

% Our interpretation of this false negative reported
% by dingo-hunter is as follows.
% For each of the above examples, we find two goroutines.
% The approach in \cite{DBLP:conf/cc/NgY16} maps
% goroutines to CFSMs.
% Hence, for both of the above programs dingo-hunter uses
% effectively the same system of CFSMs for analysis.
% As the order among goroutines is lost in the system of CFSMs,
% we encounter a false negative in case of the second example.

Our approximation with forkable behaviors
imposes no such restrictions. The first example yields
$\forkEff{\sndEvt{x}} \conc \rcvEvt{x}$
whereas the second example yields
$\rcvEvt{x} \conc \forkEff{\sndEvt{x}}$.
Thus, our tool is able to detect
the deadlock in case of the second example.

%% MS: omit, more limitations?
%% \ms{seems we have a common limitation, no fork under loop,
%%  we have some hope that this can be lifted ... future work}

\subsubsection*{Comparison with Kobayashi~\cite{DBLP:conf/concur/Kobayashi06}}

We conduct a comparison with the TyPiCal tool \cite{typical}
which implements Kobayashi's deadlock analysis~\cite{DBLP:conf/concur/Kobayashi06}. As the source language is based on the $\pi$-calculus,
we manually translated the Go examples to the syntax
supported by TyPiCal's Web Demo Interface available from Kobayashi's
homepage. The translated examples 
can be found in the \gopherlyzer\ repository~\cite{gopherlyzer}.

To the best of our knowledge, TyPiCal does \emph{not} support
a form of \SELECT ive communication.
Hence, we need to introduce some helper threads
which results in an overapproximation of the original Go
program's behavior and potentially introduces a deadlock.
Recall the discussion in Section~\ref{sec:highlights}.

For programs not making use of selective communication
(and closing of channels; another feature not supported by TyPiCal),
we obtain the same analysis results.
Analysis times seem comparable to our tool.
The exception being the \texttt{primesieve} example which
cannot be analyzed within the resource limits imposed by TyPiCal's Web
Demo Interface which we used in the experiments.
Like our tool, TyPiCal properly maintains
the order among threads. Recall
the example $\rcv{y}{x} ; (\GO\ \snd{x}{\true})$ from above.

%--- figure ---%
\begin{figure}[tp]
\begin{center}
  \begin{tabular}{ll}
    \begin{minipage}{6cm}
      \begin{lstlisting}
  // Go program        
  x := make(chan int)
  y := make(chan int)
  go func() {
    x <- 42
    v1 := <-y       // P1
    x <- 43
    v2 := <-y }()
  v3 := <-x
  v4 := <-x         // P2
  v5 := <-x
  y <- 42
  \end{lstlisting}
      Analysis report:
$
\sndEvtID{x}{1} \conc \rcvEvtID{x}{2} \conc \underline{\rcvEvtID{y}{1} \conc \rcvEvtID{x}{2}} \ldots
$      
    \end{minipage}
    &
    \begin{minipage}{6cm}

\begin{verbatim}
/*** TyPiCal input ***/  
new x in
new y in 
  x!42.y?v1.x!43.y?v2
| x?v3.x?v4.x?v5.y!42
\end{verbatim}

\begin{verbatim}
/*** TyPiCal output ***/
new x in
new y in 
  x!!42.y?v1.x!!43.y?v2
| x??v3.x?v4.x?v5.y!!42
\end{verbatim}
    \end{minipage}

  \end{tabular}
\end{center}
 \caption{Analysis Report: Gopherlyzer versus TyPiCal}
\label{fig:gopherlyzer-typical-debug-info} 
\end{figure}  

Finally, \gopherlyzer\ reports the analysis result in a different way
than TyPiCal.
The left side of Figure~\ref{fig:gopherlyzer-typical-debug-info} contains
a simple Go program and the right side its translation to TyPiCal's
source language.
TyPiCal reports that the program is unsafe and might deadlock.
Annotations \texttt{?} and \texttt{!} denote potentially stuck
receive and send operations whereas
\texttt{??} and \texttt{!!} indicate that the operations
might succeed.
% The fact that \texttt{x!43} and \texttt{y!42} are considered fine
% may be confusing for users.
The trace-based analysis (on the left) yields a non-synchronous trace
from which we can easily pinpoint the position(s) in the program
which are likely to be responsible.
In the example, the underlined events are connected
to program locations \texttt{P1} and \texttt{P2}.

%--------------------------------------------------------
%--------------------------------------------------------

\section{Conclusion}
\label{sec:conclusion}

We have introduced a novel trace-based static deadlock detection
method and built a prototype tool to analyze Go programs.
Our experiments show that our approach yields good results and its
efficiency compares favorably with existing tools of similar scope.

In future work, we intend to lift some of the restrictions of the
current approach, for example, supporting programs with dynamically
generated goroutines.
Such an extension may result in a loss of decidability of our static analysis.
Hence, we consider mixing our static analysis with some dynamic methods.
% This is a topic we plan to pursue in future work.

\section*{Acknowledgments}

We thank the APLAS'16 reviewers for their constructive feedback.

\bibliography{main}

\newpage

\appendix

\emph{Proofs and further details concerning select with default and closing of channels}

%% MS omit: found on git
%% %--------------------------------------------------------
%% %--------------------------------------------------------
%% \section{\texttt{philo}}
%% \label{sec:philo}
%% 
%% \begin{lstlisting}
%% package main
%% 
%% import "fmt"
%% import "time"
%% 
%% func philo(id int, forks chan int) {
%% 
%%         for {
%%                 <-forks
%%                 <-forks
%%                 fmt.Printf("%d eats \n", id)
%%                 time.Sleep(1 * 1e9)
%%                 forks <- 1
%%                 forks <- 1
%% 
%%                 time.Sleep(1 * 1e9) // think
%% 
%%         }
%% 
%% }
%% 
%% func main() {
%%         var forks = make(chan int)
%%         go func() { forks <- 1 }()
%%         go func() { forks <- 1 }()
%%         go func() { forks <- 1 }()
%%         go philo(1, forks)
%%         go philo(2, forks)
%%         philo(3, forks)
%% }
%% 
%% \end{lstlisting}

%--------------------------------------------------------
%--------------------------------------------------------
\section{Proofs}
\label{sec:proofs}

%--------------------------------------------------------
\subsection{Proof of Proposition \ref{prop:re-match-soundness}}

\begin{proposition}
%%  \label{prop:re-match-soundness}
  Let $r_1$,...,$r_m$, $s_1$,...,$s_n$ be forkable behaviors and $T$ be a trace such that
  $\match{\Bag{r_1, \dots, r_m}}{T}{\Bag{s_1, \dots, s_n}}$.
  Then, we find that $L(s_1) \| ... \| L(s_n) \subseteq \leftQ T (L(r_1) \| ... \| L(r_m))$.
\end{proposition}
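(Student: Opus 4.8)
The plan is to proceed by induction on the derivation of the multiset matching judgment $\match{\Bag{r_1,\dots,r_m}}{T}{\Bag{s_1,\dots,s_n}}$, with one case per rule of Definition~\ref{def:matching-relation} that can conclude such a judgment, namely \rlabel{S1}, \rlabel{S2}, \rlabel{F} and \rlabel{C}. The single-behavior work is delegated to Proposition~\ref{prop:re-match-sound}, which already gives $L(s)\subseteq \leftQ T {L(r)}$ whenever $\match{r}{T}{s}$. Before starting the induction I would record a handful of algebraic facts about shuffle and left quotient so that each case collapses to a short calculation.

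The auxiliary facts are: (i) $\|$ is commutative and associative up to set equality, so that the product $L(r_1)\|\dots\|L(r_m)$ is well-defined independently of order and bracketing (this matters because \rlabel{F}, \rlabel{S1} and \rlabel{S2} permute or single out an element of the multiset); (ii) $\|$ is monotone in each argument; (iii) $\leftQ T {(\cdot)}$ is monotone and the trace-indexed extension of the defining equations satisfies $\leftQ{T\conc T'}{L}=\leftQ{T'}{(\leftQ T {L})}$; and the crucial interaction lemma (iv) $(\leftQ T {L_1})\| L_2 \subseteq \leftQ T {(L_1\| L_2)}$. Fact (iv) is the one genuinely non-routine ingredient: it holds because whenever $T\conc w\in L_1$ and $v\in L_2$ we have $T\conc(w\| v)\subseteq (T\conc w)\| v\subseteq L_1\| L_2$, since any interleaving of $w$ and $v$ prefixed by $T$ is, in particular, an interleaving of $T\conc w$ and $v$ (draw all of $T$ first, then follow the same interleaving pattern).

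With these in hand the cases are short. For \rlabel{S2} we have $T=\varepsilon$ and must show $L(r_1)\|\dots\|L(r_n)\subseteq L(\varepsilon)\|L(r_1)\|\dots\|L(r_n)$, which is immediate from $L(\varepsilon)=\{\varepsilon\}$ and $\{\varepsilon\}\| M = M$. For \rlabel{F}, again $T=\varepsilon$, and after cancelling the common factors $L(r_1),\dots,L(r_n)$ via monotonicity it suffices to check $L(s)\| L(r)\subseteq L(\forkEff{r}\bigconc s)$; unfolding the forkable semantics gives $L(\forkEff{r}\bigconc s)=L(r)\| L(s)$ when $s\neq\varepsilon$ and $=L(r)$ when $s=\varepsilon$, so the inclusion reduces to commutativity of $\|$. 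For \rlabel{S1} I combine Proposition~\ref{prop:re-match-sound} with (ii) and (iv): from $L(s)\subseteq \leftQ T {L(r)}$ I obtain $L(s)\|L(r_1)\|\dots\|L(r_n)\subseteq (\leftQ T {L(r)})\|L(r_1)\|\dots\|L(r_n)\subseteq \leftQ T {(L(r)\|L(r_1)\|\dots\|L(r_n))}$. Finally \rlabel{C} composes the two induction hypotheses, pushing the inner inclusion through $\leftQ{T'}{(\cdot)}$ by monotonicity and then applying the composition law $\leftQ{T'}{(\leftQ T {L})}=\leftQ{T\conc T'}{L}$ from (iii).

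The main obstacle is the interaction lemma (iv); everything else is bookkeeping that reduces to commutativity, associativity and monotonicity of $\|$ together with the standard left-quotient laws. I expect the only subtlety beyond (iv) to be keeping the multiset products order-independent, which fact (i) handles once and for all.
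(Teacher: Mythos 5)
Your proof is correct and follows essentially the same route as the paper's: induction on the matching derivation, delegating single-behavior steps to Proposition~\ref{prop:re-match-sound}, with the key ingredient being the interaction of left quotient and shuffle (your lemma (iv) is the trace-level inclusion corresponding to the paper's symbol-level law $\leftQ \alpha (L_1 \| L_2) = ((\leftQ \alpha L_1) \| L_2) \cup (L_1 \| (\leftQ \alpha L_2))$). If anything, your write-up is more complete, since the paper only sketches the \rlabel{S1} and \rlabel{F} cases and treats \rlabel{F} only for a trailing fork, whereas you handle $\forkEff{r}\bigconc s$ in general together with \rlabel{S2} and \rlabel{C}.
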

\begin{proof}
   By induction on the derivation. We consider some of the cases.
  
  \textbf{Case \rlabel{S1}:}
  \bda{c}
  \myirule{\match{r}{T}{s}}
          {\match{\Bag{r, r_1, \dots,  r_n}}{T}{\Bag{s, r_1 , \dots, r_n}}}
  \eda
  By Proposition \ref{prop:re-match-sound},  $L(s) \subseteq \leftQ T L(r)$.
  We exploit the following facts:
   $\leftQ \alpha L(\alpha \conc r) = L(r)$ and
  $\leftQ \alpha (L_1 \| L_2) = ((\leftQ \alpha L_1) \| L_2) \cup (L_1 \| (\leftQ \alpha L_2))$.
  The desired result follows immediately.

  \textbf{Case \rlabel{F}:} By assumption $\match{\Bag{\forkEff{r},
      r_1,\dots, r_n}}{\varepsilon}{\Bag{r, r_1, \dots, r_n}}$.
  We have that $L(\forkEff{r}) = L(r) \| \{ \varepsilon \} = L(r)$.
  Thus, the desired result follows immediately.
  \qed
\end{proof}

%--------------------------------------------------------
\subsection{Proof of Proposition \ref{prop:sound-cmd-approx}}

\begin{proposition}
%%  \label{prop:sound-cmd-approx}
  If $\semC{S}{p}{q}$ and $\approxP{p}{r}$
  then $\match{r}{}{s}$ for some $s$ where $\approxP{q}{s}$.
\end{proposition}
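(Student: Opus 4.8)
The plan is to proceed by induction on the derivation of $\semC{S}{p}{q}$. The key observation that makes the argument go through smoothly is that the approximation relation $\approxP{\cdot}{\cdot}$ is syntax-directed: the outermost constructor of $p$ uniquely determines the outermost shape of any $r$ with $\approxP{p}{r}$. Hence in each case I first invert $\approxP{p}{r}$ to expose the structure of $r$, and then I read off the single matching rule whose conclusion rewrites $r$ into a behavior that is precisely the approximation of $q$. Since $\match{\cdot}{}{\cdot}$ abbreviates $\match{\cdot}{\varepsilon}{\cdot}$, no symbol is consumed, which matches the fact that a command step $\semC{S}{p}{q}$ produces no communication.

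Most cases are immediate rule-to-rule correspondences. For \rlabel{If-T} and \rlabel{If-F}, inversion gives $r = r_1 + r_2$ with $\approxP{p_1}{r_1}$ and $\approxP{p_2}{r_2}$, and rules \rlabel{L} and \rlabel{R} respectively produce $s = r_1$ or $s = r_2$, which already approximate the selected branch. For \rlabel{While-F}, inversion gives $r = r_1^{*}$, and rule \rlabel{K$_0$} yields $s = \varepsilon$ with $\approxP{\SKIP}{\varepsilon}$. For \rlabel{While-T}, rule \rlabel{K$_n$} yields $s = r_1 \conc r_1^{*}$, and the sequence rule for approximation gives $\approxP{p_1 ; \WHILE\ b\ \DO\ p_1}{r_1 \conc r_1^{*}}$ since $\approxP{p_1}{r_1}$ and $\approxP{\WHILE\ b\ \DO\ p_1}{r_1^{*}}$. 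The \rlabel{Skip} case uses \rlabel{A1} on $r = \varepsilon \conc r_1$, and the \rlabel{Assoc} case uses \rlabel{A3} on $r = (r_1 \conc r_2) \conc r_3$; in both the resulting behavior is exactly the approximation of $q$ by the sequence rule.

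The only genuinely inductive step is \rlabel{Reduce}, where $p = p_1 ; p_2$ steps to $q = p_1' ; p_2$ from a sub-derivation $\semC{S}{p_1}{p_1'}$. Inverting the approximation gives $r = r_1 \conc r_2$ with $\approxP{p_1}{r_1}$ and $\approxP{p_2}{r_2}$. I apply the induction hypothesis to the sub-derivation to obtain $\match{r_1}{}{s_1}$ for some $s_1$ with $\approxP{p_1'}{s_1}$, and then lift this match through the trailing $r_2$ using rule \rlabel{A2}, yielding $\match{r_1 \conc r_2}{}{s_1 \conc r_2}$. The sequence approximation rule then gives $\approxP{p_1' ; p_2}{s_1 \conc r_2}$, so $s = s_1 \conc r_2$ witnesses the claim. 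I do not expect any real obstacle here; the only point requiring care is the bookkeeping that inversion of $\approxP{p}{r}$ is unambiguous and that rule \rlabel{A2} exactly supplies the congruence needed to push the inductive match past the sequential continuation.
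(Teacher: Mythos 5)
Your proof is correct and follows essentially the same route as the paper's: induction on the derivation of $\semC{S}{p}{q}$, with each command rule matched to the corresponding matching rule (\rlabel{L}/\rlabel{R} for conditionals, \rlabel{K$_0$}/\rlabel{K$_n$} for while, \rlabel{A1} for skip, \rlabel{A3} for associativity, and the induction hypothesis lifted through \rlabel{A2} in the \rlabel{Reduce} case). Nothing further is needed.
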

\begin{proof}
  By induction.

  \textbf{Case \rlabel{If-T}:}
  \bda{c}
   \myirule{\semB{S}{b}{\true}}
           {\semC{S}{\IF\ b \ \THEN\ p_1 \ \ELSE\ p_2}{p_1}}
  \eda
  By assumption $\approxP{\IF\ b \ \THEN\ p_1 \ \ELSE\ p_2}{r_1 + r_2}$ for some $r_1$ and $r_2$
  where $\approxP{p_1}{r_1}$ and $\approxP{p_2}{r_2}$.
  Via rule \rlabel{L} we find that $\match{r_1 + r_2}{}{r_1}$ and we are done.

  \textbf{Case \rlabel{If-F}:} Similar to the above.

  \textbf{Case \rlabel{While-F}:}
  \bda{c}
   \myirule{\semB{S}{b}{\false}}
          {\semC{S}{\WHILE\ b \ \DO\ p}{\SKIP}}
   \eda
   By assumption $\approxP{\WHILE\ b \ \DO\ p}{r^*}$ for some $r$.
   Via rule \rlabel{K$_0$} we find that $\match{r^*}{}{\varepsilon}$.
   By definition $\approxP{\SKIP}{\varepsilon}$. Thus, we are done.

   \textbf{Case \rlabel{While-T}:}
   \bda{c}
      \myirule{\semB{S}{b}{\true}}
          {\semC{S}{\WHILE\ b \ \DO\ p}{p ; \WHILE\ b \ \DO\ p}}
   \eda
   By assumption $\approxP{\WHILE\ b \ \DO\ p}{r^*}$ for some $r$ where $\approxP{p}{r}$.
   Via rule \rlabel{K$_n$} we find that $\match{r^*}{}{r \conc r^*}$.
   Based on our assumption we find that
   $\approxP{p ; \WHILE\ b \ \DO\ p}{r \conc r^*}$ and thus we are done.

   \textbf{Case \rlabel{Skip}:} $\semC{S}{\SKIP ; p}{p}$.
   By assumption $\approxP{\SKIP ; p}{\varepsilon \conc r}$.
   Via rule \rlabel{A1} we conclude that $\match{\varepsilon \conc r}{}{r}$
   and are done.

   \textbf{Case \rlabel{Reduce}:}
   \bda{c}
       \myirule{\semC{S}{p}{p'}}
          {\semC{S}{p;p''}{p';p''}}
    \eda
    By assumption $\approxP{p; p''}{r \conc r''}$ where $\approxP{p}{r}$
    and $\approxP{p''}{r''}$ for some $r$ and $r''$.
    By induction $\match{r}{}{r'}$ for some $r'$ where $\approxP{p'}{r'}$.
    Via rule \rlabel{A2} we obtain $\match{r \conc r''}{}{r' \conc r''}$
    and we are done again.

    \textbf{Case \rlabel{Assoc}:} $\semC{S}{(p_1 ; p_2) ; p_3}{p_1 ; (p_2 ; p_3)}$.
    Follows via rule \rlabel{A3}.
    \qed
\end{proof}  

%--------------------------------------------------------
\subsection{Proof of Proposition \ref{prop:sound-approx}}

\begin{proposition}
%%\label{prop:sound-approx}
  If $\semP{\Config S {p_1,...,p_m}}{T}{\Config {S'} {q_1,...,q_n}}$
  and $\approxP{p_i}{r_i}$ for $i=1,...,m$
  then $\match{\Bag{r_1,...,r_m}}{T}{\Bag{s_1,...,s_n}}$
  where   $\approxP{q_j}{s_j}$ for $j=1,..,n$.
\end{proposition}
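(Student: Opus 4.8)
The plan is to proceed by induction on the derivation of the program execution judgment $\semP{\Config S {p_1,...,p_m}}{T}{\Config {S'} {q_1,...,q_n}}$, following the rules of Definition~\ref{def:program-execution}. For each rule \rlabel{Step}, \rlabel{Fork}, \rlabel{Stop}, \rlabel{Comm}, and \rlabel{Closure}, I would exhibit a matching derivation that produces the same trace $T$ and whose resulting multiset approximates the resulting threads. The two base-level single-thread rules \rlabel{Step} and \rlabel{Stop} together with \rlabel{Fork} handle the structural bookkeeping, while \rlabel{Comm} is where the actual communication trace is produced, and \rlabel{Closure} is dispatched by straightforward use of the induction hypothesis and matching rule \rlabel{C}.

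\textbf{The easy cases.} For \rlabel{Step}, the premise is $\semC{S}{p_1}{p_1'}$ with empty trace; here I would invoke Proposition~\ref{prop:sound-cmd-approx} to obtain $\match{r_1}{}{s}$ with $\approxP{p_1'}{s}$, and then lift this to the multiset level via rule \rlabel{S1}, leaving the other threads untouched. For \rlabel{Fork}, the thread $\GO\ p_1 \fatsemi q_1$ approximates to $\forkEff{r}\bigconc s$ (using the $\bigconc$ convention that mirrors $\fatsemi$), and matching rule \rlabel{F} splits this into $\Bag{s,r,\dots}$, exactly matching the approximation of the spawned configuration $p_1, q_1, \dots$. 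For \rlabel{Stop}, the terminated thread $\SKIP$ approximates to $\varepsilon$, and rule \rlabel{S2} removes it from the multiset. For \rlabel{Closure}, I would apply the induction hypothesis to each of the two sub-derivations to obtain matches with traces $T$ and $T'$, and then compose them with rule \rlabel{C} to get a match with trace $T\conc T'$.

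\textbf{The main case.} The crux is rule \rlabel{Comm}, which relies on the two-thread synchronous communication judgment of Definition~\ref{def:sync-comm}. Here the two threads start with select statements whose approximations are sums $\sum_{i\in I} r_i \conc s_i$ and $\sum_{j\in J} f_j \conc t_j$. Rule \rlabel{Sync} picks the matching case $k$ (a receive $\rcv{x}{y}$) and case $l$ (a send $\snd{y}{b}$) and emits trace $\sndEvt{y}\conc\rcvEvt{y}$. On the matching side I would build this by first selecting the relevant summand in each thread using rules \rlabel{L}/\rlabel{R} (and \rlabel{A2}, \rlabel{A3} to expose the leading symbol), then consuming the leading symbols $\sndEvt{y}$ and $\rcvEvt{y}$ via rule \rlabel{X} in each of the two threads, threading everything through \rlabel{S1} and composing with \rlabel{C}. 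The resulting continuations $s_k$ and $t_l$ approximate the program continuations $p_k$ and $q_l$, and the $\fatsemi$/$\bigconc$ correspondence ensures the trailing program parts $p_1'', p_2''$ line up with $s \bigconc r''$-style behaviors.

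\textbf{The expected obstacle.} The delicate part is the interplay between the program-level $\fatsemi$ operator and the behavior-level $\bigconc$ operator, since the approximation of a sequential composition $p\fatsemi q$ is not literally $r\conc s$ but requires the $\bigconc$ convention to handle the terminal $\SKIP$/$\varepsilon$ case uniformly; I would need to verify that $\approxP{p\fatsemi q}{r \bigconc s}$ whenever $\approxP{p}{r}$ and $\approxP{q}{s}$, treating $r$ and $\Bag{r}$ as equal as the excerpt permits. A second subtlety is that rule \rlabel{Comm} embeds a two-thread communication within an $n$-thread configuration, so I must be careful that applying the single-step matching derivation only to the two active threads (via repeated \rlabel{S1}) genuinely leaves the spectator threads $r_3,\dots,r_n$ unchanged, which it does by construction. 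Modulo these bookkeeping concerns, each case reduces to an application of an already-established result (Proposition~\ref{prop:sound-cmd-approx}) or a direct construction of matching steps, so no genuinely hard combinatorial argument is anticipated.
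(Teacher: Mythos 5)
Your proposal follows essentially the same route as the paper's own proof: induction on the execution derivation, dispatching \rlabel{Step} via Proposition~\ref{prop:sound-cmd-approx} and \rlabel{S1}, \rlabel{Fork} via \rlabel{F} with the $\fatsemi$/$\bigconc$ case split, \rlabel{Stop} via \rlabel{S2}, and \rlabel{Closure} via the induction hypothesis and \rlabel{C}. In fact you are slightly more thorough than the paper, which silently omits the \rlabel{Comm} case; your treatment of it (selecting summands with \rlabel{L}/\rlabel{R}, reassociating with \rlabel{A2}/\rlabel{A3}, consuming $\sndEvt{y}$ then $\rcvEvt{y}$ with \rlabel{X} under \rlabel{S1} and composing with \rlabel{C}) is the right construction and matches the trace order fixed by rule \rlabel{Sync}.
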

\begin{proof}
 By induction.
 
\textbf{Case \rlabel{Fork}:}
\bda{c}
\semP{\Config S {\GO\ p_1 \fatsemi q_1, p_2,  ... , p_n}}{\varepsilon}{\Config S {p_1, q_1, p_2, ..., p_n}}
\eda

We assume $q_1 \not = \SKIP$.
By assumption $\approxP{\GO\ p_1 \fatsemi q_1}{r}$ for some $r$.
Hence, $r = \forkEff{r_1}\conc s_1$ where $\approxP{p_1}{r_1}$ and $\approxP{q_1}{s_1}$.
We further assume $\approxP{p_i}{r_i}$ for $i=2...n$.
Then, we find via rule \rlabel{F2}
$\match{\Bag{\forkEff{r_1}\conc s_1, r_2, ... ,r_n}}{}{\Bag{r_1, s_1, r_2, ... ,r_n}}$
and we are done.
For $q_1 = \SKIP$ the reasoning is similar.

\textbf{Case \rlabel{Step}:}
\bda{c}
  \myirule{\semC{S}{p_1}{p_1'}}
          {\semP{\Config S {p_1,...,p_n}}{\varepsilon}{\Config S {p_1',...,p_n}}}
\eda
By Proposition~\ref{prop:sound-cmd-approx} we find that $\match{r_1}{}{s_1}$ where $\approxP{p_1'}{s_1}$.
Via rule \rlabel{S1} we can conclude that
$\match{\Bag{r_1,...,r_n}}{\varepsilon}{\Bag{s_1,r_2,...,r_n}}$ and we are done.

\textbf{Case \rlabel{Stop}}: Via rule \rlabel{S2}.

\textbf{Case \rlabel{Closure}:}
By induction and application of rule \rlabel{C}.

%% MS: TODO
%% \textbf{Case \rlabel{CommSync}:}
%%   \bda{c}
%%   \rlabel{CommSync} \ \
%%   \myirule{\semP{\Config S {p_1,p_2}}{T}{\Config{S'}{p_1',p_2'}}}
%%           {\semP{\Config S {p_1\fatsemi p_1'' , p_2\fatsemi p_2'' ,
%%                 p_3,  \dots, p_n}}{T}{
%%               \Config{S'}{p_1'\fatsemi p_1'', p_2'\fatsemi p_2'', p_3,
%%                 \dots, p_n}}}
%%   \eda          
%% \ms{TODO}
\qed
\end{proof}

%--------------------------------------------------------
\subsection{Proof of Proposition \ref{prop:stuck}}

We require some auxiliary statements which both
can be verified by some straightforward induction.

The language denotation obtained is never empty.
\begin{proposition}
\label{prop:prog-is-non-phi}  
  Let $p$ be a program and $r$ be a forkable behavior such that $\approxP{p}{r}$.
  Then, we find that $L(r) \not= \{ \}$.
\end{proposition}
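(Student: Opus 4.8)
The plan is to prove a stronger statement by structural induction on the derivation of $\approxP{p}{r}$ (equivalently, on the structure of $p$): for \emph{every} non-empty trace language $K \subseteq \Sigma^*$ we have $L(r,K) \neq \{\}$. The proposition then follows by instantiating $K = \{\varepsilon\}$, which is non-empty, and recalling that $L(r) = L(r,\{\varepsilon\})$. Before starting the induction I would record one structural observation: the approximation relation never produces the empty-language expression $\phi$. Inspecting the rules of Definition~\ref{def:approximation}, every right-hand side is built only from $\varepsilon$, communication symbols $\sndEvt{y}$/$\rcvEvt{y}$, sums, concatenations, stars and forks. This is essential, since $L(\phi,K) = \emptyset$ is the only clause that can yield an empty language, and we must guarantee it is never reached.

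The induction then proceeds clause by clause. The base case $\approxP{\SKIP}{\varepsilon}$ and the communication cases are immediate, since $L(\varepsilon,K) = K$ and $L(\alpha,K) = \{\alpha \conc w \mid w \in K\}$ are non-empty exactly when $K$ is. The alternative case ($\IF$, producing $r+s$) and the $\SELECT$ case (producing $\sum_{i\in I} r_i \conc s_i$, with $I \neq \emptyset$ for well-formed programs, as send/receive are unary selects) follow because a union of sets, at least one of which is non-empty by the induction hypothesis, is non-empty. For the Kleene-star case $r^*$ I would use $L(r^*,K) = \mu\,\lambda X.\, L(r,X)\cup K$ and the fact that, by Knaster--Tarski, the least fixpoint of $F$ contains $F(\emptyset) = L(r,\emptyset)\cup K \supseteq K$; since $K \neq \{\}$, the fixpoint is non-empty. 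For the fork case $\forkEff r$ I would use $L(\forkEff r, K) = L(r)\,\|\,K$ together with the induction hypothesis (with continuation $\{\varepsilon\}$) that $L(r) \neq \{\}$, and the elementary fact that the shuffle of two non-empty languages is non-empty.

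The step that forces the strengthened hypothesis---and the main obstacle---is the sequencing case $\approxP{p;q}{r\conc s}$, where $L(r\conc s, K) = L(r, L(s,K))$. A direct induction on the plain statement $L(r) \neq \{\}$ breaks down here, because $L(r \conc s)$ threads a \emph{non-base} continuation into the first component, and the unstrengthened hypothesis says nothing about $L(r,K')$ for $K' \neq \{\varepsilon\}$. With the generalized hypothesis the case is routine: applying it to $s$ with the non-empty $K$ gives $L(s,K) \neq \{\}$, and applying it to $r$ with the now-non-empty continuation $L(s,K)$ gives $L(r, L(s,K)) \neq \{\}$. Generalizing over all non-empty continuations $K$ is therefore exactly what makes the induction go through, and every other case uses the hypothesis only in this generalized form (as already visible in the star and fork cases above).
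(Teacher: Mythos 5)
Your proof is correct and is exactly the ``straightforward induction'' that the paper leaves unstated for this auxiliary proposition; the strengthening to arbitrary non-empty continuations $K$ is precisely what is needed to push the induction through the sequencing clause $L(r \conc s, K) = L(r, L(s,K))$, and your case analysis (including the observation that the approximation rules never emit $\phi$, and the implicit well-formedness assumption $I \neq \emptyset$ for \SELECT) is sound. Nothing further is required.
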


A non-empty language can always be matched against some trace.

\begin{proposition}
\label{prop:non-phi-yields-trace}  
  Let $r$ be a forkable behavior such that $L(r) \not= \{ \}$.
  Then, we find that $\match{r}{T}{\varepsilon}$ for some trace $T$.
\end{proposition}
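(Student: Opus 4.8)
The plan is to proceed by structural induction on the forkable behavior $r$, exhibiting in each case a concrete trace $T$ together with a match derivation $\match{r}{T}{\varepsilon}$. Throughout, I would use the composition rule \rlabel{C} to chain the partial reductions that arise, and I identify a singleton multiset $\Bag{r}$ with $r$, and the fully reduced $\Bag{}$ with $\varepsilon$, as stipulated before Definition~\ref{def:matching-relation}.

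Most cases are immediate. For $r=\phi$ the hypothesis $L(\phi)\neq\{\}$ is false, so the case is vacuous. For $r=\varepsilon$ I take $T=\varepsilon$. For $r=\alpha$ I read $\alpha$ as $\alpha \conc \varepsilon$ and apply rule \rlabel{X}, giving $\match{\alpha \conc \varepsilon}{\alpha}{\varepsilon}$. For $r=s+t$, non-emptiness of $L(s+t)=L(s)\cup L(t)$ forces at least one of $L(s),L(t)$ to be non-empty; stepping into that summand by rule \rlabel{L} or \rlabel{R}, then invoking the induction hypothesis and \rlabel{C}, yields the result. The Kleene-star case $r=s^*$ is the easiest: since $\{\varepsilon\}$ is always contained in the least fixpoint defining $L(s^*)$, rule \rlabel{K$_0$} directly gives $\match{s^*}{}{\varepsilon}$ with $T=\varepsilon$, so the induction hypothesis is not even needed here.

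The two substantial cases are concatenation and fork. For $r=s\conc t$ I first need the auxiliary fact that $L(s\conc t)=L(s,L(t))$ is non-empty only if both $L(s)$ and $L(t)$ are non-empty; this is a short separate induction on $s$ using the definition of $L(\cdot,\cdot)$, where the crucial subcases are that $L(s,\emptyset)=\emptyset$ and that a shuffle $L(s)\| K$ is non-empty exactly when both factors are. Granting it, the induction hypothesis supplies traces $T_1,T_2$ with $\match{s}{T_1}{\varepsilon}$ and $\match{t}{T_2}{\varepsilon}$. To push the first reduction underneath the context $[\cdot]\conc t$ I would establish a congruence lemma, namely $\match{s}{T}{s'}$ implies $\match{s \conc t}{T}{s' \conc t}$, proved by induction on the match derivation (empty-trace steps lifted by \rlabel{A2}, symbol-consuming steps by re-associating with \rlabel{A3} and then applying \rlabel{X}). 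This gives $\match{s\conc t}{T_1}{\varepsilon \conc t}$; rule \rlabel{A1} reduces $\varepsilon \conc t$ to $t$, and then $\match{t}{T_2}{\varepsilon}$ finishes, with $T=T_1\conc T_2$. For $r=\forkEff{s}$, the shuffle identity $L(\forkEff{s})=L(s)\|\{\varepsilon\}=L(s)$ shows $L(s)\neq\{\}$, so the induction hypothesis gives $\match{s}{T}{\varepsilon}$. Reading $\forkEff{s}$ as the singleton multiset with trivial continuation $\forkEff{s}\bigconc\varepsilon$, rule \rlabel{F} yields $\match{\Bag{\forkEff{s}}}{\varepsilon}{\Bag{\varepsilon,s}}$, rule \rlabel{S2} discards the $\varepsilon$ component, rule \rlabel{S1} lifts $\match{s}{T}{\varepsilon}$ into the multiset, and a final \rlabel{S2} removes the last component; chaining all of this with \rlabel{C} produces $\match{\forkEff{s}}{T}{\varepsilon}$.

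I expect the main obstacle to be the congruence lemma for concatenation: although routine, it must account for re-association via \rlabel{A3} interacting with symbol consumption via \rlabel{X}, and it is the only place where the match relation is manipulated underneath a non-trivial context. The remaining delicacy is pure bookkeeping, namely keeping the identifications among $r$, $\Bag{r}$, $\Bag{\varepsilon}$ and $\Bag{}$ consistent, so that ``reaching $\varepsilon$'' and ``reaching the empty multiset'' coincide in the fork case.
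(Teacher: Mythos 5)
The paper never spells out a proof of this proposition; it is dispatched with the remark that it ``can be verified by some straightforward induction'', so your structural induction on $r$ is exactly the intended route. Your base cases, the $+$ case, the Kleene-star case via \rlabel{K$_0$}, the fork case via \rlabel{F}, \rlabel{S1}, \rlabel{S2}, and the auxiliary fact that $L(s\conc t)\neq\emptyset$ forces $L(s)\neq\emptyset$ and $L(t)\neq\emptyset$ are all correct.

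The one step that would fail as literally stated is the congruence lemma for concatenation. You propose to prove that $\match{s}{T}{s'}$ implies $\match{s \conc t}{T}{s' \conc t}$ by induction on the match derivation, lifting empty-trace steps with \rlabel{A2} and symbol steps with \rlabel{A3} followed by \rlabel{X}. But when $s$ contains a fork, the derivation of $\match{s}{T_1}{\varepsilon}$ supplied by the induction hypothesis is not a single-behavior derivation at all: rule \rlabel{F} turns the residual into a genuine multiset, so the conclusion $\match{s\conc t}{T}{s'\conc t}$ does not even have the right shape, and moreover \rlabel{F} applied to $\forkEff{u}\bigconc \varepsilon$ versus $\forkEff{u}\bigconc t$ produces different continuations ($\varepsilon$ versus $t$), so there is no uniform ``append $\conc t$ to the residual'' transformation of the given derivation. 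The repair is routine but must be stated at the multiset level with a distinguished component: if $\match{\Bag{s}}{T}{\Bag{s_1,\dots,s_k}}$ where $s_1$ is the component descending from the spine of $s$, then $\match{\Bag{s\conc t}}{T}{\Bag{s_1\bigconc t, s_2,\dots,s_k}}$, using \rlabel{A3} to re-associate before each \rlabel{X} step and \rlabel{A1} when $s_1$ reaches $\varepsilon$. With that strengthening your argument for $s\conc t$ goes through with $T=T_1\conc T_2$, and the remaining bookkeeping you mention (identifying $\alpha$ with $\alpha\conc\varepsilon$, and $\Bag{\varepsilon}$, $\Bag{}$ with $\varepsilon$) is no looser than what the paper itself relies on elsewhere.
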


\begin{proposition}
%% \label{prop:stuck} 
  Let $p$ be a stuck program and $r$ be a forkable behavior
  such that  $\approxP{p}{r}$.
  Then, $r$ is stuck.
\end{proposition}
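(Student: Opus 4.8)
The plan is to exploit soundness of the approximation (Proposition~\ref{prop:sound-approx}) to transport the stuck configuration into a matching derivation on behaviors, and then to extend that derivation into one ending in the empty multiset $\Bag{}$ whose overall trace is non-synchronous. Since $p$ is stuck, Definition~\ref{def:stuck-programs} supplies a configuration $\CalC = \Config S {p_1,\dots,p_n}$ with $n>1$, reachable from the initial configuration by $\semP{\Config{S_0}{p}}{T_0}{\CalC}$, such that every $p_i$ starts with a select statement and no rule applies to $\CalC$. First I would record that $T_0$ is synchronous of even length: each program-execution step contributes either $\varepsilon$ (rules \rlabel{Step}, \rlabel{Fork}, \rlabel{Stop}) or a complementary pair $\sndEvt y \conc \rcvEvt y$ (rule \rlabel{Comm} via \rlabel{Sync}), so $T_0$ is a concatenation of complementary pairs. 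Applying Proposition~\ref{prop:sound-approx} to this run (with $\approxP{p}{r}$, treating $r$ as $\Bag{r}$) yields $\match{\Bag{r}}{T_0}{\Bag{r_1,\dots,r_n}}$ where $\approxP{p_i}{r_i}$ for each $i$.

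Next I would extract the leading communication symbols of two distinct threads. Because $p_i$ begins with a select, its approximation $r_i$ is, up to the structural rules \rlabel{A1}--\rlabel{A3}, a sum $\sum_{j} \alpha^{(i)}_j \conc s^{(i)}_j$ (possibly followed by a continuation), where each $\alpha^{(i)}_j$ is the single send/receive symbol of the corresponding select case. Hence, choosing any case of thread~$1$ and any case of thread~$2$, rules \rlabel{L}/\rlabel{R}, \rlabel{X} (with \rlabel{A2}/\rlabel{A3} to peel off continuations) give matches $\match{r_1}{\alpha_1}{r_1'}$ and $\match{r_2}{\alpha_2}{r_2'}$ for chosen leading symbols $\alpha_1,\alpha_2$. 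The crucial point is that $\alpha_2 \ne \overline{\alpha_1}$: if one were $\sndEvt y$ and the other $\rcvEvt y$, then threads~$1$ and~$2$ would satisfy the premises of rule \rlabel{Sync} (Definition~\ref{def:sync-comm}) and a synchronous communication step would apply to $\CalC$, contradicting stuckness.

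I would then assemble the full derivation. Using \rlabel{S1} twice I consume $\alpha_1$ and $\alpha_2$, reaching $\Bag{r_1', r_2', r_3,\dots,r_n}$. Every component here is the approximation of a program continuation, so by Proposition~\ref{prop:prog-is-non-phi} its language is non-empty, and by Proposition~\ref{prop:non-phi-yields-trace} it matches some trace down to $\varepsilon$; applying \rlabel{S1} and \rlabel{S2} to each component in turn drains the multiset to $\Bag{}$ via some trace $V$. Rule \rlabel{C} then combines everything into $\match{\Bag{r}}{T_0 \conc \alpha_1 \conc \alpha_2 \conc V}{\Bag{}}$. It remains to see that $T = T_0 \conc \alpha_1 \conc \alpha_2 \conc V$ is non-synchronous: by Definition~\ref{def:sync-trace} and the stated equivalences a synchronous trace is a concatenation of complementary pairs, so its symbols pair up consecutively into complementary pairs. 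As $T_0$ has even length, the pair occupying positions $(|T_0|+1,|T_0|+2)$ in any such decomposition of $T$ is exactly $\alpha_1 \conc \alpha_2$, which is \emph{not} complementary; hence $T$ admits no synchronous decomposition and $r$ is stuck.

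The main obstacle I anticipate is this last step, making the non-synchronicity argument rigorous. It rests on two facts that must be pinned down carefully: that stuckness genuinely forbids complementary leading symbols across the two threads (a direct reading of rule \rlabel{Sync}), and that the rigid consecutive-pairing structure of synchronous traces, together with the even length of the genuine execution prefix $T_0$, forces $\alpha_1$ and $\alpha_2$ into the \emph{same} pair. This is also exactly where the hypothesis $n>1$ is indispensable: with a single thread one could exhibit only one leading symbol, whose successor in the resolution might well be its complement (e.g.\ $\snd{x}{\true};\rcv{z}{x}$ yields the synchronous $\sndEvt x \conc \rcvEvt x$), which is precisely the degenerate situation the proposition excludes.
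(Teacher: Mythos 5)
Your proof is correct and follows essentially the same route as the paper's: transport the stuck configuration via Proposition~\ref{prop:sound-approx} into $\match{r}{T_0}{\Bag{r_1,\dots,r_n}}$, extend by one non-complementary pair of leading symbols from two distinct threads, and drain the remainder using Propositions~\ref{prop:prog-is-non-phi} and~\ref{prop:non-phi-yields-trace}. You merely make explicit two steps the paper leaves implicit --- that stuckness (via rule \rlabel{Sync}) forces the two leading symbols to be non-complementary, and that the even length of $T_0$ pins $\alpha_1\conc\alpha_2$ into a single pair of any would-be synchronous decomposition --- which is a welcome tightening but not a different argument.
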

\begin{proof}
  By assumption we find $\semP{\Config \_ {p}}{T_s}{\Config \_ {p_1,...,p_n}}$
  where $n>1$ and each $p_i$ starts with a communication primitive or a select statement.
  For brevity, we ignore the state component which is abbreviated by $\_$.
  By construction, $T_s$ is a synchronous trace.

  By Proposition~\ref{prop:sound-approx} we find
  $\match{r}{T_s}{\Bag{r_1,...,r_n}}$ where $\approxP{p_i}{r_i}$ for $i=1...n$.
 
  By assumption none of the $p_i$ can be reduced further. Recall that $n>1$.
  Hence, we must be able to further reduce at least two of the $r_i$'s
  such that we obtain a non-synchronous trace.
  For example, $\match{r}{T_s \conc \alpha \conc \beta}{\Bag{r_1', r_2',r_3,...,r_n}}$
  where $\overline{\alpha} \not = \beta$.
  Based on Propositions~\ref{prop:prog-is-non-phi} and~\ref{prop:non-phi-yields-trace}
  we can argue that $\Bag{r_1', r_2',r_3,...,r_n}$ can be further reduced.
  Hence, $\match{r}{T_s \conc \alpha \conc \beta \conc T}{\varepsilon}$ for some $T$.
  The overall trace $T_s \conc \alpha \conc \beta \conc T$ is non-synchronous.
  Thus, we can conclude that $r$ is stuck.
  \qed
 \end{proof}

%--------------------------------------------------------
\subsection{Proof of Proposition \ref{prop:fsa-covers-matching}}

\begin{proposition}[FSA covers Matching]
%%\label{prop:fsa-covers-matching}  
  Let $r$ be a well-behaved behavior
  such that $\match{r}{T}{\Bag{s_1,...,s_m}}$ for some non-empty trace $T=\alpha_1\conc ... \conc \alpha_n$.
  Then, there exists a path
  $r=r_0 \stackrel{\alpha_1}{\rightarrow} r_1 ... r_{n-1} \stackrel{\alpha_n}{\rightarrow} r_n$ in $\FSA{r}$
  such that $L(s_1) \| ... \| L(s_m) \subseteq L(r_n)$.
\end{proposition}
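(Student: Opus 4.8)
The plan is to obtain the statement by chaining two results that are already available: the soundness of the matching relation with respect to left quotients (Proposition~\ref{prop:re-match-soundness}), and the defining property of the derivative-based automaton (the Well-Behaved Forkable FSA proposition), which asserts that the state reached after reading $T$ has language exactly $\leftQ T L(r)$. Since the first result bounds the shuffle of the residual behaviors by the left quotient, and the second identifies that same left quotient with the language of the FSA state reached along $T$, transitivity of $\subseteq$ delivers the claim for precisely that path.

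In detail, I would first treat the single behavior $r$ as the singleton multiset $\Bag{r}$ (which the paper identifies), so that the hypothesis reads $\match{\Bag{r}}{T}{\Bag{s_1,\dots,s_m}}$. Applying Proposition~\ref{prop:re-match-soundness} then gives $L(s_1) \| \dots \| L(s_m) \subseteq \leftQ T L(r)$, using that the shuffle of the single language $L(r)$ is just $L(r)$. Next, because $r$ is well-behaved and $T = \alpha_1 \conc \dots \conc \alpha_n$ is non-empty, I invoke the Well-Behaved Forkable FSA proposition to obtain a path $r = r_0 \stackrel{\alpha_1}{\rightarrow} r_1 \dots r_{n-1} \stackrel{\alpha_n}{\rightarrow} r_n$ in $\FSA{r}$ with $\leftQ T L(r) = L(r_n)$. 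Combining the two facts yields $L(s_1) \| \dots \| L(s_m) \subseteq \leftQ T L(r) = L(r_n)$, which is exactly the desired conclusion.

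The one point requiring care is the passage from a match trace to a genuine word read by the automaton. A derivation of $\match{r}{T}{\Bag{s_1,\dots,s_m}}$ may interleave $\varepsilon$-steps (e.g.\ rules \rlabel{F}, \rlabel{A1}, \rlabel{S2}) with the symbol-producing step \rlabel{X}, so $T$ should be read as the word $\alpha_1 \dots \alpha_n \in \Sigma^*$ obtained after collapsing the neutral $\varepsilon$ factors, which is legitimate since $\varepsilon$ is a unit for $\conc$. As $T$ is assumed non-empty we have $n \geq 1$ and the FSA proposition applies verbatim. I expect the main subtlety to be confirming that the symbol sequence extracted from the matching derivation coincides with the label sequence of the FSA path; this is exactly what clause~(2) of the Well-Behaved Forkable FSA proposition guarantees, as it quantifies over all non-empty traces. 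Consequently no separate interleaving argument about forks is needed: the shuffle semantics is already baked into $L(\cdot)$, hence into the left quotient $\leftQ T L(r)$ that both results share, so the fork-induced concurrency on the matching side is automatically accounted for on the automaton side.
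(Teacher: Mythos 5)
Your argument is exactly the paper's proof: apply Proposition~\ref{prop:re-match-soundness} to bound the shuffle of the residuals by $\leftQ T L(r)$, then invoke property~(2) of the well-behaved forkable FSA construction to identify that left quotient with $L(r_n)$ along the path labelled by $T$, and chain the two by transitivity. The additional remarks about collapsing $\varepsilon$-steps and singleton multisets are harmless elaborations of details the paper leaves implicit.
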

\begin{proof}
  By Proposition \ref{prop:re-match-soundness}
  we have that $L(s_1) \| ... \| L(s_m) \subseteq \leftQ T L(r)$.
  By property (2) for $\FSA{r}$ we find there exists a
     $r=r_0 \stackrel{\alpha_1}{\rightarrow} r_1 ... r_{n-1} \stackrel{\alpha_n}{\rightarrow} r_n$ in $\FSA{r}$
  such that  $\leftQ T L(r) = L(r_n)$.
  From above, we derive that
  $L(s_1) \| ... \| L(s_m) \subseteq \leftQ T L(r) = \leftQ T L(r) = L(r_n)$
  and we are done.
\end{proof}

%--------------------------------------------------------
\subsection{Proof of Proposition \ref{prop:minimial-non-sync-path}}

\begin{proposition}
%%  \label{prop:minimial-non-sync-path}
  Let $r$ be a well-behaved behavior such that $\FSA{r}$ is stuck.
  Then, any non-synchronous path that exhibits stuckness
  can be reduced to a non-synchronous path where a state appears at most twice along that path
\end{proposition}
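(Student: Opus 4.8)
The plan is to prove this by a cycle-removal (pumping-down) argument, carried out relative to a fixed normal form for synchronicity. First I would record a characterization: using the grammar of Definition~\ref{def:sync-trace} together with the stated equivalence laws, a trace $T = \alpha_1 \conc \cdots \conc \alpha_n$ is synchronous if and only if $n$ is even and $\alpha_{2i} = \overline{\alpha_{2i-1}}$ for every $i$; that is, the consecutive pairs read from the left are complementary. Equivalently, the set $\mathrm{Sync}$ is the regular language accepted by the small DFA $M$ whose states are a start/accept state $q_0$, one ``pending'' state $q_c$ for each symbol $c \in \Sigma$, and an absorbing sink $\bot$, with $\delta(q_0,c)=q_c$ and $\delta(q_c,d)=q_0$ if $d=\overline c$ and $\delta(q_c,d)=\bot$ otherwise. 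I will use two immediate facts about $M$: an even-length string leads from $q_0$ to $q_0$ or $\bot$, and from $q_c$ to some $q_{c'}$ or $\bot$; and once in $\bot$ it stays there. In particular every synchronous trace has even length.

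Next, I would phrase ``reduction'' as repeated loop removal. A path exhibiting stuckness reaches a final state of $\FSA{r}$, has all intermediate languages non-empty, and has a non-synchronous trace; all three properties survive deletion of any loop, since the surviving states form a subsequence and the endpoint is unchanged. Among all paths obtained from the given path by successively deleting loops while keeping the trace non-synchronous (a non-empty family, as it contains the given path itself), I would pick one of minimal length; since loops are non-empty this minimum exists. The goal is then to show this minimal path has each state at most twice.

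The core step is a contradiction whenever some state $q$ occurs at least three times. Fixing three occurrences, I write the trace as $A \conc w_1 \conc w_2 \conc B$, where $w_1,w_2$ are the two loops at $q$ and $A,B$ the surrounding prefix and suffix. Deleting $w_1$, deleting $w_2$, and deleting the combined loop each yield a valid path to the \emph{same} final state, with traces $A\conc w_2\conc B$, $A\conc w_1\conc B$, and $A\conc B$; each is a single loop removal, so by minimality all three must be synchronous. Evenness of synchronous traces then gives $|w_1|$ and $|w_2|$ even. Writing $u = \delta^*(q_0,A)$, a short case analysis on the parity of $|A|$ forces $A\conc w_1\conc w_2\conc B$ to be synchronous as well, contradicting non-synchronicity of the chosen path; equivalently, at least one of the three deletions preserves non-synchronicity, contradicting minimality. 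In the even case ($u=q_0$), synchronicity of $A\conc w_1\conc B$ and $A\conc w_2\conc B$ forces $w_1,w_2$ to be individually synchronous (otherwise $M$ enters $\bot$), so their insertion is transparent and $A\conc w_1\conc w_2\conc B$ is synchronous.

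The hard part will be the odd case, where $u = q_c$ is a pending state and the pair straddling each loop boundary can realign when $w_1$ and $w_2$ are spliced together. The observation that unblocks it: since $A\conc B$ is synchronous and $u=q_c$, the suffix $B$ must begin with $\overline c$; and since $A\conc w_1\conc B$ is synchronous while $|w_1|$ is even, reading $w_1$ from $q_c$ lands in $q_{\mathrm{last}(w_1)}$, after which $B$ must begin with $\overline{\mathrm{last}(w_1)}$, forcing $\mathrm{last}(w_1)=c$. Hence the $M$-state after $A\conc w_1$ equals the $M$-state after $A$, so appending $w_2$ behaves exactly as in the synchronous path $A\conc w_2\conc B$ and returns $q_0$, giving the contradiction. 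This pins the whole argument to the evenness of the excised loops plus the boundary-realignment lemma, which is where I expect the main care to be needed.
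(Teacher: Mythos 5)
Your proposal is correct and follows essentially the same route as the paper's proof: remove loops at a thrice-repeated state, observe that by minimality all three single-loop deletions would have to be synchronous, and derive a contradiction via evenness of the excised loops plus realignment at the loop boundaries. Your DFA-based bookkeeping of the ``pending'' symbol actually handles the case of a general non-synchronous prefix $A$ more uniformly than the paper, which works out the single-symbol prefix $w_1=\alpha$ explicitly and only sketches the longer-prefix case.
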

\begin{proof}
  By assumption the $\FSA{r}$ is stuck. We need to verify that for each non-synchronous path
  there exists a non-synchronous, minimal path.
  By minimal we mean that a state appears at most twice along that path.
  
  W.l.o.g., we assume the following
  $$
  r \stackrel{w_1}{\rightarrow} s \stackrel{w_2}{\rightarrow} s \stackrel{w_3}{\rightarrow} s \stackrel{w_4}{\rightarrow} t
  $$
  where state $s$ is repeated more than twice. There are possible further repetitions
  within the subpath $s \stackrel{w_4}{\rightarrow} t$
  but not within the subpath $r \stackrel{w_1}{\rightarrow} s$.
  By assumption $w_1 \conc w_2 \conc w_3 \conc w_4$ is non-synchronous.
  To show that we can derive a minimal non-synchronous path, we distinguish among
  the following cases.

  Suppose $w_1$ and $w_4$ are synchronous. Hence, either $w_2$ or $w_3$ must be non-synchronous.
  Suppose $w_2$ is non-synchronous.
  Then we can `simplify` the above to the non-synchronous example
  $r \stackrel{w_1}{\rightarrow} s \stackrel{w_2}{\rightarrow} s \stackrel{w_4}{\rightarrow} t$.
  A similar reasoning applies if $w_3$ is non-synchronous.

  Suppose $w_1$ is synchronous and $w_4$ is non-synchronous.
  Immediately, we obtain a `simpler' non-synchronous example
  $r \stackrel{w_1}{\rightarrow} s \stackrel{w_4}{\rightarrow} t$.
  
  Suppose $w_1$ is non-synchronous. We consider among the following subcases.
  Suppose that $w_1 = \alpha$.
  Suppose that $w_1 \conc w_2 \conc w_4$ and $w_1 \conc w_3 \conc w_4$ are synchronous
  (otherwise we are immediately done).
  Suppose that $w_1 \conc w_4$ is synchronous. We will show that this leads to a contradiction.
  From our assumption, we derive that $w_4 = \overline{\alpha} \conc w_4'$.
  As we assume that $w_1 \conc w_2 \conc w_4$ and $w_1 \conc w_3 \conc w_4$ are synchronous,
  we can conclude that $w_2 = \overline{\alpha}\conc ... \conc \alpha$
  and $w_3 = \overline{\alpha}\conc ... \conc \alpha$.
  However, this implies that $w_1 \conc w_2 \conc w_3 \conc w_4$ is synchronous
  which is a contradiction.
  Hence, either $w_1 \conc w_2 \conc w_4$ or $w_1 \conc w_3 \conc w_4$ is non-synchronous
  and therefore the example can be further simplified.

  Suppose that $w_1$ contains more than two symbols, e.g.~$w_1 = w_1' \conc \alpha$.
  If $w_1'$ is a non-synchronous, we can immediately conclude that
  $r \stackrel{w_1}{\rightarrow} s \stackrel{w_4}{\rightarrow} t$ is a (more minimal) non-synchronous path.
  Let us assume that $w_1'$ is synchronous.
  Then, we can proceed like above (case $w_1 = \alpha$) to show that
  a more minimal, non-synchronous path exists.
  
  These are all cases. Note that $w_1 = \varepsilon$ is covered
  by the above (cases where $w_1$ is assumed to be synchronous).
  \qed
\end{proof}

%--------------------------------------------------------
%--------------------------------------------------------
\section{Select with default}
\label{sec:select-default}

We show how to support \texttt{select}
with a default case written $\SELECT\ [e_i \Rightarrow q_i \mid q]_{i\in I}$.
The default case will only be executed if none of the other cases apply.

\bda{c}
  \rlabel{DefaultStep} \ \
  \myirule{\not \exists i \in I, j \in \{2,...,n\} \
            \semP{\Config S {q_i, p_i}}{\_}{\Config \_ {q_i',p_i}} }
          {\semP{\Config S {\SELECT\ [e_i \Rightarrow q_i \mid q]_{i\in I}\fatsemi p_1'' , p_2, \dots, p_n}}{}{
              \Config{S}{q\fatsemi p_1'', p_2, \dots, p_n}}}

\eda          

In case of the approximation, we represent the default case via $\varepsilon$.

\bda{c}
  \myirule{\approxP{e_i}{r_i}
           \ \ \approxP{q_i}{s_i}
           \ \ \mbox{for $i\in I$}
           \ \ \approxP{q}{s}
          }
          {\approxP{\SELECT\ [e_i \Rightarrow q_i \mid q]_{i\in I}}{(\sum_{i \in I} \ r_i \conc s_i}) + \varepsilon \conc s}

\eda

To eliminate false positives in the presence of \SELECT\ with default
we assume that a non-synchronous path can be eliminated/resolved
by making use of
$\match{\Bag{\alpha_1 \oplus ... \oplus \alpha_n \oplus \varepsilon, r_2,...,r_n}}{}{\Bag{r_2,...,r_n}}$.

%--------------------------------------------------------
%--------------------------------------------------------
\section{Closing Channels}
\label{sec:closing-channels}

In Go, a channel can be closed which means that no more values
which will be sent to it.
Any receive operation invoked after a channel is closed
will succeed and yield the default value.
However, any send operation leads to a `panic' which we consider as unsafe.

In terms of our analysis framework, we can integrate this additional
language feature by simply removing any receive event from the trace
which occurs after a channel has been closed.
As we generate traces from the FSA and FSA states can be connected to program
points, it is straightforward to identify the position in the trace
after which all receive events (for that channel) shall be removed.

For example, consider 
\begin{lstlisting}
  x := make(chan int)
  go func() {
    x <- 1 }()
  <-x
  close(x)
  <-x
\end{lstlisting}

Our analysis reports the non-synchronous trace
$$
\sndEvtID{x}{2} \conc \rcvEvtID{x}{1} \conc \rcvEvtID{x}{1}
$$
  where $\rcvEvtID{x}{1}$ represents
  the strictly non-synchronous portion of the trace.
  By taking into account the feature of closing of channels,
  this portion can be eliminated.
  Hence, our analysis reports that the program is safe.

  Consider the following variant where the close operation
  is part of the `then' branch of a conditional statement.
  The actual condition is omitted for brevity.
  
\begin{lstlisting}
  x := make(chan int)
  go func() {
    x <- 1 }()
  <-x  
  if ... {
    close(x)
  }  
  <-x
\end{lstlisting}

The tricky bit here is that the channel will only be closed if
the if-condition applies.
We therefore use a slightly refined language
of forkable behaviors to carry out the approximation
of the program's communication behavior.

$$
\forkEff{\sndEvtID{x}{2}} \conc \rcvEvtID{x}{1} \conc
(\mathit{close(x)} + \varepsilon) \conc \rcvEvtID{x}{1}
$$
where the  new event $\mathit{close(x)}$ represents
closing of a channel.

In the resulting FSA, we find the trace
$$
\sndEvtID{x}{2} \conc \rcvEvtID{x}{1} \conc \mathit{close(x)} \conc \rcvEvtID{x}{1}
$$
As any receive following a close operation will be non-blocking,
the program is safe for this specific program run.

There is however another alternative path reported by our analysis
which is unsafe
$$
\sndEvtID{x}{2} \conc \rcvEvtID{x}{1} \conc \varepsilon \conc \rcvEvtID{x}{1}
$$
We include the redundant $\varepsilon$ to highlight that the (implicit)
`else' branch was chosen.

\end{document}